\pdfoutput=1
\PassOptionsToPackage{table}{xcolor}
\documentclass[sigconf, nonacm]{acmart}
\usepackage{stmaryrd}
\setcitestyle{numbers,sort&compress}

\AtEndPreamble{%
\theoremstyle{acmdefinition}
\newtheorem{remark}[theorem]{Remark}}





\newcommand\vldbpagestyle{plain}


\usepackage{amssymb}
\usepackage{amsthm}
\newtheorem{theorem}{Theorem}[section]

\newtheorem{lemma}[theorem]{Lemma}
\theoremstyle{definition}
\newtheorem{example}[theorem]{Example}

\usepackage{adjustbox}
\newcommand{\Changed}[1]{{\color{black}#1}}

\newcommand{\Paragraph}[1]{\emph{#1}}

\usepackage[binary-units]{siunitx}
\DeclareSIUnit{\txn}{txn}
\DeclareSIUnit{\batch}{batch}
\sisetup{per-mode=symbol}

\graphicspath{{./DrawIO/}}

\newcommand{\Replica}[1][r]{\MakeUppercase{#1}}
\newcommand{\Client}[1][c]{\MakeLowercase{#1}}
\newcommand{\Replicas}[1][r]{\mathfrak{\MakeUppercase{#1}}}
\newcommand{\ID}[1]{\mathop{\textsf{id}}(#1)}
\newcommand{\T}{\tau}
\newcommand{\ma}{\mathbf{a}}
\newcommand{\m}{\mathbf{m}}
\newcommand{\n}{\mathbf{n}}
\newcommand{\f}{\mathbf{f}}

\newcommand{\Instance}[1]{\mathcal{I}_{#1}}
\newcommand{\Primary}[1]{\mathcal{P}_{#1}}


\newcommand{\rn}{\rho}

\newcommand{\Message}[2]{\textsc{#1}(#2)}

\newcommand{\Name}[1]{\textnormal{\textsc{#1}}}
\newcommand{\PBFT}{\Name{Pbft}}

\newcommand{\HS}{\Name{HotStuff}}

\newcommand{\Narwhal}{\Changed{\Name{Narwhal-HS}}}

\newcommand{\RCC}{\Name{RCC}}
\newcommand{\SpotLess}{\Name{SpotLess}}

\newcommand{\MirBFT}{\Name{MirBFT}}
\newcommand{\ISS}{\Name{ISS}}
\newcommand{\RDB}{\Name{Apache ResilientDB}}
\newcommand{\Bitcoin}{\Name{Bitcoin}}
\newcommand{\Ethereum}{\Name{Ethereum}}
\newcommand{\MinBFT}{\Name{MinBFT}}

\newcommand{\SH}[1]{\texttt{#1}}
\newcommand{\MAC}{\SH{MAC}}
\newcommand{\DS}{\SH{DS}}
\newcommand{\RDMS}{\SH{RDMS}}


\newcommand{\RVS}{\Name{RVS}}
\newcommand{\Chained}{\Name{Chained}}

\newcommand{\Cert}[2]{\llparenthesis#1\rrparenthesis_{#2}}


\newcommand{\TP}[1]{\mathit{T}_{#1}}


\newcommand{\abs}[1]{\lvert #1 \rvert}

\newcommand{\union}{\cup}
\newcommand{\intersect}{\cap}

\newcommand{\Good}{\cellcolor{green!20}}

\renewcommand{\v}[1][v]{#1}
\newcommand{\MName}[1]{\textsc{#1}}
\newcommand{\PP}{\mathbb{P}}
\newcommand{\PPproof}[1]{\operatorname{cert}(#1)}
\newcommand{\PPf}[1]{\operatorname{claim}(#1)}
\newcommand{\AcceptedSet}{\mathbb{CP}}

\newcommand{\Precedes}[1]{\operatorname{precedes}(#1)}
\newcommand{\Depth}[1]{\operatorname{depth}(#1)}
\newcommand{\RVar}[2]{\textit{#1}_{#2}}
\newcommand{\digest}[1]{\operatorname{digest}(#1)}

\newcommand{\timer}[1]{\textit{t}_{#1}}


\usepackage[noend]{algorithmic}
\usepackage{algorithm,float}
\makeatletter

\makeatother

\usepackage{algorithmic}
\newcommand{\GETS}{:=}
\newenvironment{myprotocol}{
    \hrule
    \small
    \smallskip
    \algsetup{linenosize=\footnotesize}
    \begin{algorithmic}[1]
        
        \newcommand{\SPACE}{\item[]}
        \newcommand{\TITLE}[2]{\item[] \textbf{\underline{##1}} (##2) \textbf{:}\\[2pt]}
        \makeatletter
            \newcommand{\EVENT}[1]{\STATE \textbf{event} ##1 \textbf{do}\begin{ALC@g}}
            \newcommand{\ENDEVENT}{\end{ALC@g}}
        \makeatother
        
        \makeatletter
            \newcommand{\FUCTION}[2]{\STATE \textbf{function} \Name{##1}(##2) \textbf{do}\begin{ALC@g}}
            \newcommand{\ENDFUNCTION}{\end{ALC@g}}
        \makeatother
}{
    \end{algorithmic}%
    \hrule
}

\usepackage{tikz,pgfplots,pgfplotstable}

\usetikzlibrary{shapes,arrows,automata,positioning,cd}
\tikzset{
  spotlessedge/.style   = {black, ->, >=stealth},
}
\usepackage{xcolor}

\usetikzlibrary{arrows.meta,decorations.pathreplacing}
\tikzset{
    >=Stealth,
    smalltext/.append style={scale=0.7},
    dot/.style={circle,scale=0.35,draw=black,fill=black}
}

\usetikzlibrary{automata, positioning, arrows}
\usetikzlibrary{arrows.meta}
\tikzset{
    >=Stealth,
    plot/.append style={baseline,scale=0.475},
    label/.append style={font=\strut\footnotesize},
    dot/.style={circle,scale=0.35,draw=black,fill=black},
}
\pgfplotscreateplotcyclelist{mycyclelist}{
    black   ,every mark/.append style={solid,fill=\pgfplotsmarklistfill},mark=*\\ 
    red     ,every mark/.append style={solid,fill=\pgfplotsmarklistfill},mark=square*\\
    blue    ,every mark/.append style={solid,fill=\pgfplotsmarklistfill},mark=triangle*\\
    teal    ,every mark/.append style={solid,fill=\pgfplotsmarklistfill},mark=pentagon*\\
    brown   ,every mark/.append style={solid,fill=\pgfplotsmarklistfill},mark=halfsquare*\\ 
    orange  ,every mark/.append style={solid,fill=\pgfplotsmarklistfill},mark=halfcircle*\\
    violet  ,every mark/.append style={solid,fill=\pgfplotsmarklistfill,rotate=180},mark=halfdiamond*\\
}
\pgfplotscreateplotcyclelist{mycyclelistex}{
    black   ,every mark/.append style={solid,fill=\pgfplotsmarklistfill},mark=*\\ 
    red     ,every mark/.append style={solid,fill=\pgfplotsmarklistfill},mark=square*\\
}
\pgfplotsset{
    tick label style={font=\large},
    legend style={font=\Large,cells={anchor=west}},
    title style={font=\Large},
    label style={font=\Large},
    width=262.5pt,
    height=185pt,
    every axis/.append style={
        ylabel near ticks,
        xlabel near ticks,
        mark size=2.5pt,
        cycle list name=mycyclelist,
        font=\Large,
        y tick label style={
                        /pgf/number format/precision=1,
                        /pgf/number format/fixed,
                        /pgf/number format/fixed zerofill
                    }
    },
    barstyle/.append style={
        ybar,
        bar width={0.5cm},
        enlarge x limits=0.4,
        enlarge y limits={upper=0.025},
        ymin=0,
        xtick=data
    }
}

\input{plotdata}

\title[SpotLess: Concurrent Rotational Consensus Made Practical through Rapid View Synchronization]{\Changed{SpotLess: Concurrent Rotational Consensus Made Practical through Rapid View Synchronization}}

\author{Dakai Kang, Sajjad Rahnama, Jelle Hellings$^{\dagger}$, Mohammad Sadoghi}
\affiliation{
    \institution{Exploratory Systems Lab, Department of Computer Science, University of California, Davis}
	\institution{$\dagger$Department of Computing and Software, McMaster University}
}

\begin{document}

\begin{abstract}
    The emergence of blockchain technology has renewed the interest in \emph{consensus-based} data management systems that are resilient to failures. To maximize the throughput of these systems, we have recently seen several prototype consensus solutions that optimize for throughput at the expense of overall implementation complexity, high costs, and reliability. Due to this, it remains unclear how these prototypes will perform in real-world environments.
    
    In this paper, we present \SpotLess{}, a novel concurrent rotational consensus protocol made practical. Central to \SpotLess{} is the combination of (1) \Changed{a \emph{chained rotational consensus design} for replicating requests with a reduced message cost and low-cost failure recovery that eliminates the traditional complex, error-prone view-change protocol; (2) the novel \emph{Rapid View Synchronization} protocol that enables \SpotLess{} to work in more general network assumptions, without a need for a \emph{Global Synchronization Time} to synchronize view, and recover valid earlier views with the aid of non-faulty replicas without the need to rely on the primary; (3) a high-performance \emph{concurrent consensus architecture} in which independent instances of the chained consensus operate concurrently to process requests with high throughput, thereby avoiding the bottlenecks seen in other rotational protocols.}
    
    Due to the concurrent consensus architecture, \SpotLess{} greatly outperforms traditional primary-backup consensus protocols such as \PBFT{} (by up to 430\%), \Narwhal{} (by up to 137\%), and \HS{} (by up to 3803\%). Due to its reduced message cost, \SpotLess{} is even able to outperform \RCC{}, a state-of-the-art high-throughput concurrent consensus protocol, by up to 23\%. Furthermore, \SpotLess{} is able to maintain a stable and low latency and consistently high throughput even during failures.
    \end{abstract}

\maketitle

\pagestyle{\vldbpagestyle}


\section{Introduction}

The emergence of \Bitcoin{}~\cite{bitcoin} and blockchain technology has renewed the interest in \emph{consensus-based} 
resilient data management systems (\RDMS{}s)\Changed{~\cite{aca_ped,pbftj,flp,byzgeneralproblem,pease1980reaching,dworkmodel}} that can provide resilience to failures and can manage data 
between fully-independent parties (federated data management). Due to these qualities, there is widespread 
interest in \RDMS{}s with applications in finance, health care, IoT, agriculture, fraud-prevention, and other 
industries~\cite{blockhealthover,blockchain_iot,blockfood,bceco,ruan2022}.

Although \Bitcoin{} builds on many pre-existing techniques, the novel way in which 
\Bitcoin{} used these techniques was a major breakthrough for resilient systems, as \Bitcoin{} showed 
that resilient systems with thousands of participants can solve large-scale problems~\cite{aca_ped,blockdist}. 
Furthermore, \Bitcoin{} did so in a \emph{permissionless} way without requiring a known set of participants 
and allowing participants to join and leave the system at any time. The highly-flexible permissionless design 
of blockchains such as \Bitcoin{} and \Ethereum{}~\cite{ethereum} is not suitable for high-performance \RDMS{}s, 
however: their abysmal transaction throughput, high operational costs, and per-transaction costs make 
them unsuitable for typical data-based applications~\cite{bitcoin,ethereum,aca_ped,blockdist,badcoin,badbadcoin,eyal2018majority}. Instead, 
data-based applications often are deployed in an environment with a set of identifiable \emph{participants} 
(who may behave arbitrarily) due to which they can use \emph{permissioned} designs using primary-backup consensus 
protocols~\cite{sbft,hotstuff,zyzzyvaj,pbftj,cbook} such as the \emph{Practical Byzantine Fault Tolerance consensus protocol} (\PBFT{})~\cite{pbftj}. 

\begin{figure*}
\setlength\tabcolsep{3pt}
\centering
\small
\begin{tabular}{lccccccccc} 
\toprule
&\multicolumn{2}{c}{Environment}&Concurrent&Chained&Threshold&\multicolumn{4}{c}{Communication Complexity}\\
Protocol&Safety&Liveness&Consensus&Consensus&Signatures&Phases&Messages&(\emph{at primary})&(\emph{per decision})\\
\midrule
\SpotLess{} &\Good{}Asynchronous&\Good{}Partial Synchrony&\Good{}yes&\Good{}yes&\Good{}no&$6$&$c(3\n^2)$&$c(3\n)$&$\n^2$\\
\midrule
\PBFT{}~\cite{pbftj}&\Good{}Asynchronous&\Good{}Partial Synchrony&no&no&\Good{}no&$3$&$2\n^2$&$3\n$&$2\n^2$\\
\RCC{}~\cite{rcc}&\Good{}Asynchronous&\Good{}Partial Synchrony&\Good{}yes&no&\Good{}no&$3$&$c( 2\n^2)$&$c(3\n)$&$2\n^2$\\
\HS{}~\cite{hs}&\Good{}Asynchronous&Partial Synchrony&no&\Good{}yes&yes&$8$&$8\n$&$4\n$&$2\n$\\
\bottomrule
\end{tabular}
\caption{Comparison of \SpotLess{} with three state-of-the-art consensus protocols. Here, $\n$ is the number of replicas, $c$, $1 \leq c \leq \n$, is the number of concurrent instances, and the \emph{per decision} cost is the amortized cost of a single consensus decision.}
\label{fig:summary}
\end{figure*}

\RDMS{}s with fine-tuned primary-backup consensus implementations can process hundreds-of-thousands 
client requests per second~\cite{book}. Such high-throughput implementations come with severe limitations, 
however. First, in primary-backup consensus, a single replica (the primary) coordinates the replication of 
requests. Due to this central role of the primary, performance is usually bottlenecked by the network bandwidth 
or computational resources available to that primary~\cite{rcc,mirbft,iss}. Furthermore, the central role of the 
primary is detrimental to scalability, due to which high throughput can only be achieved on small-scale deployments. 
Finally, the techniques necessary in primary-backup consensus to reach high throughput 
(e.g., \emph{out-of-order processing}~\cite{pbftj,book}) \Changed{require complex implementations that keep track of many partially-processed rounds of consensus. When recovering from failures, this normal-case complexity necessitates complex and costly (in terms of message size and duration) \emph{view-change protocols}  to figure out which of these partially-processed consensus rounds can contribute to a consistent recovered state.} Recently, we have seen two 
significant developments to address these limitations in isolation.

First, the introduction of \emph{concurrent consensus protocols} such as \RCC{}~\cite{rcc},
\MirBFT{}~\cite{mirbft} and \ISS{}~\cite{iss} have significantly improved the scalability and performance of high-throughput consensus. 
These concurrent consensus protocols do so by taking a primary-backup consensus protocol such as \PBFT{} as 
their basis and then run multiple instances (each with a distinct primary, e.g., each non-faulty replica is a 
primary of its own instance) at the same time, this to remove any single-replica bottlenecks. On the one hand, 
the concurrent consensus is able to eliminate bottlenecks, improve scalability, and improve performance. On the 
other hand, existing concurrent consensus protocols do so by further increasing 
both the implementation complexity and the cost of \emph{recovery}. \Changed{For example, \RCC{} shuts down faulty primaries for an exponentially increasing number of rounds after receiving sufficient complaints.}

Second, the introduction of the \emph{chained consensus protocol} \HS{}~\cite{hs} has provided a simplified 
and easier-to-implement consensus protocol with low communication costs. To achieve this, \HS{} chains 
consecutive consensus decisions, which allows \HS{} to overlap communication costs for consecutive consensus 
decisions and minimize the cost of \emph{recovery}. \HS{} uses low-cost recovery to change 
primaries after each consensus decision, thereby reducing the impact of any malicious primaries. Finally, 
\HS{} uses \emph{threshold signatures}~\cite{tsecc} to make all communication phases \emph{linear} in cost. 
The commendable simplicity and low cost of \HS{} do come at the expense of performance and resilience, 
however. First, \Changed{the rotational design of \HS{}, which disables \emph{out-of-order processing}, inherently bounds performance by \emph{message delays} and makes \HS{} incapable of fully utilizing computing and network resources, which causes the low throughput of \HS{}.} The negative impact of message delays is further compounded by the reliance on threshold signatures, which incur additional rounds of communication and have high computational costs.  \Changed{Furthermore, the low-cost design of \emph{recovery} in \HS{} reduces the resilience compared to \PBFT{}, as \HS{} relies on a black-box Pacemaker for \emph{view synchronization}, which is essential to the liveness of rotational protocols.}~\cite{hslive, prestigebft}.

\Changed{In this paper, we present \SpotLess{}, the first practical consensus protocol that combines simplicity with high performance. \SpotLess{} does so by combining a novel \emph{chained rotational consensus design} that is optimized toward simplicity, resilience, low message complexity, and latency with a high-performance \emph{concurrent consensus architecture}. Central to the chained rotational consensus design of \SpotLess{} is \emph{Rapid View Synchronization}  (\RVS{}), which provides continuous low-cost primary rotation to deal with malicious behavior. \RVS{} enables \SpotLess{} to work in more general network assumptions, without a need for a \emph{Global Synchronization Time} to synchronize view, and recover valid earlier views with the aid of non-faulty replicas without the need to rely on the primary.

The rotational design of \SpotLess{} eliminates the need for the traditional complex and error-prone view-change protocols found in \PBFT{} and its variants: due to the rotational design of \SpotLess{}, only information on a single round is used during recovery. In addition, \RVS{} provides strong \emph{view synchronization}, resolving the liveness issues of previous works. Furthermore, \RVS{} does not require costly threshold signatures and provides robust failure recovery steps even when communication is unreliable. Finally, by combining the \emph{chained rotational consensus design} with a \emph{concurrent consensus architecture}, we remove the bottleneck of message delays typically seen in rotational designs \emph{without} having to resort to highly-complex implementation techniques such as out-of-order processing.}

To evaluate the performance of \SpotLess{} in practice, we 
have implemented \SpotLess{} in \Changed{\RDB{} (Incubating)},
our high-performance resilient blockchain database that serves as a testbed for 
future \RDMS{} technology. Our evaluation shows that \SpotLess{} greatly outperforms existing consensus 
protocols such as \PBFT{}~\cite{pbftj} by up to 430\%, \Narwhal{}~\cite{narwhal} by up to 137\%, 
and \HS{}~\cite{hotstuff} by up to 3803\%. Furthermore, due to the low 
message complexity of \SpotLess{}, it is even able to outperform \RCC{}~\cite{rcc} by up to 23\% in normal  
conditions while serving client requests with lower latency in all cases. Finally, due to the 
robustness of \RVS{}, \SpotLess{} is able to maintain a stable latency and consistently high throughput even 
during failures.

Our contributions are as follows:

\begin{enumerate}
\item In Section~\ref{sec:design}, we present the single-instance \emph{chained consensus design} of \SpotLess{} 
that provides the consensus replication using \emph{rapid view synchronization}. 
\item In Section~\ref{sec:concurrent}, we provide the \emph{concurrent consensus architecture} employed by 
\SpotLess{} to run multiple instances of the chained consensus \emph{in parallel}, due to which \SpotLess{} has 
highly-scalable throughput akin to \RCC{}.
\item In Section~\ref{sec:eval}, we empirically evaluate \SpotLess{} in \Changed{\RDB{}} and compare its performance with 
state-of-the-art consensus protocols such as \HS{}~\cite{hotstuff}, \PBFT{}~\cite{pbftj}, \RCC{}~\cite{rcc}, 
and \Narwhal{}~\cite{narwhal}. In our evaluation, we show the \emph{excellent properties} of \SpotLess{}, which 
is even able to achieve higher throughput than the concurrent consensus protocol \RCC{}, while providing 
a low and stable latency in all cases.
\end{enumerate}

In addition, we introduce the terminology and notation used throughout this paper in 
Section~\ref{sec:prelim}, discuss related work in Section~\ref{sec:related}, and conclude on our 
findings in Section~\ref{sec:conclusion}. Finally, we have summarized the properties of \SpotLess{} and 
how they compare with other common and state-of-the-art consensus protocols in Figure~\ref{fig:summary}.

\section{Preliminaries}\label{sec:prelim}

\paragraph{System}
We model our \emph{system} as a fixed set of replicas $\Replicas$.  We write $\n = \abs{\Replicas}$ to denote 
the number of replicas and we write $\f$ to denote the number of \emph{faulty replicas}. Each replica 
$\Replica \in \Replicas$ has a unique identifier $\ID{\Replica}$ with $0 \leq \ID{\Replica} < \n$.  We assume $\n > 3\f$ 
(a minimal requirement to provide consensus in an asynchronous environment~\cite{book}), that non-faulty replicas behave in accordance with the protocols they are executing, and that faulty replicas can behave arbitrarily, possibly coordinated and malicious ways. We do not make 
any assumptions about clients: all clients can be malicious without affecting \SpotLess{}.

\paragraph{Consensus}
\SpotLess{} is a \emph{consensus protocol} that decides the sequence of \emph{client requests} executed 
by all non-faulty replicas in the system $\Replicas$. To do so, \SpotLess{} provides three 
\emph{consensus guarantees}~\cite{book,distalgo}:
\begin{enumerate}
    \item \emph{Termination}. If non-faulty replica $\Replica \in \Replicas$ decides upon an $\rn$-th 
    client request, then all non-faulty replicas $\Replica[q] \in \Replicas$ will decide upon an 
    $\rn$-th client request;
    \item \emph{Non-Divergence}. If non-faulty replicas $\Replica_1, \Replica_2 \in \Replicas$ make 
    $\rn$-th decisions $\T_1$ and $\T_2$, respectively, then $\T_1 = \T_2$ (they decide upon the same 
    $\rn$-th client request).
    \item \emph{Service.} Whenever a non-faulty client $\Client$ requests execution of $\T$, then all 
    non-faulty replicas will eventually decide on a client request of $\Client$.
\end{enumerate}
We note that we use \SpotLess{} in a setting of a replicated service that executes client requests. Hence, 
instead of the abstract \emph{non-triviality guarantee} typically associated with consensus~\cite{book}, \SpotLess{} 
guarantees \emph{service}. Adapting \SpotLess{} to settings where other versions of \emph{non-triviality} 
are required is straightforward.

\paragraph{Communication}
We assume \emph{asynchronous communication}: messages can get lost or arbitrarily delayed. As consensus cannot 
be solved in asynchronous environments~\cite{flp}, we adopt the partial synchrony model of \PBFT{}~\cite{pbftj}: we 
\emph{always} guarantee non-divergence (referred to as \emph{safety}), while only guaranteeing 
termination and service during periods of reliable communication with a bounded message delay (referred 
to as \emph{liveness}). We assume that periods of unreliable communication are always followed by sufficiently-long 
periods of synchronous communication (during which \SpotLess{} can complete a limited number of steps to 
restore liveness).

\paragraph{Authentication}
We assume \emph{authenticated communication}: faulty replicas are able to impersonate each other, 
but replicas cannot impersonate non-faulty replicas. Authenticated communication is a minimal requirement 
to deal with malicious behavior~\cite{book}. To enforce authenticated communication, we use two mechanisms: 
\emph{message authentication codes} (\MAC{}s) and \emph{digital signatures} (\DS{}s)~\cite{cryptobook}. As 
\MAC{}s do not guarantee \emph{tamper-free message forwarding}, we only use \MAC{}s (which are cheaper than 
\DS{}s) to authenticate those messages that are not forwarded. For all other messages, we use \DS{}s. 
We write $\Cert{v}{p}$ to denote a value $v$ signed by participant $p$ (a client or a replica). 
Finally, we write $\digest{v}$ to denote the message digest of a value $v$ constructed using the same 
\emph{secure cryptographic hash function} as the one used when signing $v$~\cite{cryptobook}.

\section{SpotLess Design Principles}\label{sec:design}
\SpotLess{} combines a chained consensus design with
a high-performance concurrent architecture. To maximize resilience in practical network environments in which
communication can become \emph{unreliable} \Changed{and messages can get lost}, the chained consensus instances of \SpotLess{} use \emph{Rapid View Synchronization} (\RVS{}) to assure that each instance can always recover and resume consensus.

Our presentation of individual \SpotLess{} instance is broken up into \Changed{five} parts. \Changed{Each \emph{chained consensus instance} of \SpotLess{} operates in views $\v \gets 0, 1, 2, 3, \dots$. First, in Section~\ref{ss:single}, we show the two steps in every view. Second, in Section~\ref{ss:commit}, we present the normal-case replication steps and the three-phase commit algorithm used by each chained consensus instance. Third, in Section~\ref{ss:rules}, we formalize the guarantees provided by the normal-case replication steps and prove the safety of \SpotLess{}.} Then, in Section~\ref{ss:rvs}, we present the design of \emph{\RVS{}}. \RVS{} bootstraps the guarantees provided by the normal-case replication toward providing per-instance consensus. Next, in Section~\ref{ss:timer}, we describe how \SpotLess{} assures per-instance consensus in an asynchronous environment and formally prove the liveness of \SpotLess{}.

\subsection{\Changed{Steps in Every View: Propose and Synch Primitives}}\label{ss:single}

\Changed{View $\v$ is coordinated by the replica $\Primary{} \in \Replicas$ with
$\ID{\Primary{}} = \v \bmod \n$.} We say that $\Primary{}$ is the \emph{primary} of view $\v$ and all
other replicas act as \emph{backups}. In view $\v$, primary $\Primary{}$ will be
able to \emph{propose} the next client request $\T$ upon which the system aims to achieve consensus. \Changed{
To do so, the system proceeds in \emph{two steps i.e. Propose and Synch}. First, the primary inspects the existing chain and decides from which proposal it extends a new proposal, then the primary picks a valid client request $\T$, wraps and broadcasts a new proposal and broadcasts. Second, the backup replicas decide whether to vote for the new proposal and broadcast their decisions, where the new proposal is \emph{conditionally prepared} by a replica if it receives $\n-\f$ concurring votes. Now, we explain the two steps in detail below.}

First, primary $\Primary{}$ inspects the results of the preceding views to determine the highest extendable
proposal \Changed{$\PP'$ past} view $\v-1$\Changed{, such that $\Primary{}$ believes that at least $\n-\f$ replicas will vote for a new proposal extending $\PP'$. We will explore how $\PP'$ is chosen in Section~\ref{ss:rules}.} \Changed{Then, primary $\Primary{}$ picks a client request $\T$ from some client $\Client$ that it has not yet proposed
and proposes $\T$ by broadcasting a \MName{Propose} message of the form
$\Changed{\PP} \GETS \Message{Propose}{\v,\T,\PPproof{\Changed{\PP'}}}$ to all backups, in which $\PPproof{\Changed{\PP'}}$ is a \emph{certificate} for the preceding proposal $\Changed{\PP'}$ that $\Primary{}$ chooses. The certificate is either a list of $\n-\f$ digital signatures, and we will explain how certificates are used in Section~\ref{ss:rules}.} To assure that $\T$ cannot be forged by the primary, we assume that all client requests are digitally signed by the client $\Client$. To assure that the authenticity of $\Changed{\PP}$ can be established and that $\Changed{\PP}$ can be forwarded, the primary $\Primary{}$ will digitally sign the message $\Changed{\PP}$.

\Changed{Second,} the backups establish whether the primary $\Primary{}$ correctly proposed a \emph{unique} proposal to
them. Specifically, the backups will exchange \MName{Sync} messages between them via which they can determine
whether $\Changed{\PP}$ is the only proposal that can collect enough endorsements to generate a certificate in the current view (necessary to provide non-divergence) and to ensure that enough non-faulty replicas received the same well-formed 
proposal $\Changed{\PP}$ to assure that $\Changed{\PP}$ can be recovered in any future view (independent of any malicious behavior). To do so,
each backup $\Replica \in \Replicas$ performs the following steps upon receiving message
$\Changed{\PP} \GETS \Message{Propose}{\v,\T,\PPproof{\Changed{\PP'}}}$ with digital signature $\Cert{\Changed{\PP'}}{\Primary{}}$:
\begin{enumerate}
\renewcommand{\theenumi}{S\arabic{enumi}}
\item $\Replica$ checks whether $\Cert{\Changed{\PP'}}{\Primary{}}$ is a \emph{valid digital signature};
\item $\Replica$ checks whether $\T$ is a \emph{valid client request};
\item $\Replica$ checks whether view $\v$ is the \emph{current} view; and
\item\label{s:claim} \Changed{$\Replica$ checks whether $\PPproof{\Changed{\PP'}}$ is valid if
$\Replica$ has not \emph{conditionally prepared} $\Changed{\PP'}$.}
\end{enumerate}

Only if the proposal $\Changed{\PP}$ passes all these checks, the replica $\Replica$ will consider $\Changed{\PP}$ to be \emph{well-formed}. In this case, backup $\Replica$ \emph{records} $\Changed{\PP}$. If $\Changed{\PP}$ is the first \emph{acceptable} proposal $\Replica$ receives in view $\v$ (we detail the conditions of \emph{acceptable} proposals in Section~\ref{ss:rules}), $\Replica$ will broadcast the message $\RVar{ms}{\Replica} \GETS \Message{Sync}{\v, \PPf{\Changed{\PP}}, \AcceptedSet}$, in which $\PPf{\Changed{\PP}} \GETS (\v, \digest{\Changed{\PP}},\Cert{\Changed{\PP}}{\Primary{}})$ is a \emph{claim} that $\Changed{\PP}$ is the  well-formed proposal that backup $\Replica$ received in view $\v$, and \Changed{$\AcceptedSet$ is a set of pairs in the form of (view, digest) for the proposals that $\Replica$ has \emph{conditionally prepared}}. We will explore the details of $\AcceptedSet$ in Section~\ref{ss:rules}.

Otherwise, if backup $\Replica$ determines a failure in view $\v$ ($\Replica$ did not receive any valid proposals in view $\v$ while it should have received one), then $\Replica$ will end up broadcasting the message $\RVar{ms}{\Replica} \GETS \Message{Sync}{\v, \PPf{\varnothing}, \AcceptedSet}$ to all backup replicas, \Changed{claiming to have not received any valid proposals in view $v$.}

To assure that the authenticity of $\RVar{ms}{\Replica}$ can be established \emph{without verifying digital
signatures} and that $\RVar{ms}{\Replica}$ can be forwarded, the replica $\Replica$ will include both a message
authentication code and the digital signature $\Cert{\RVar{ms}{\Replica}}{\Replica}$. To reduce computational
costs, the message authentication codes of \MName{Sync} messages are \emph{always} verified, whereas digital
signatures are \emph{only verified in cases where recovery is necessary}, we refer to Section~\ref{ss:rvs} for
the exact verification rules.

\begin{remark}\label{rem:redun}
To simplify presentation, all replicas broadcast messages to all replicas (including themselves). Doing so, we simplify the claims and proofs made in this paper. Without affecting the correctness of \SpotLess{}, one can
eliminate sending messages to oneself. In addition, the primary broadcasts the proposal $\Changed{\PP}$ together with a matching \MName{Sync} message.
\end{remark}

\subsection{\Changed{Chained Three-Phase Commit Algorithm in Normal Case}}\label{ss:commit}

\begin{figure}[t]
    \centering
    \begin{tikzpicture}[yscale=0.5]
        \draw[thick,draw=black!75] (1.75,   0) edge ++(6.5, 0)
                                   (1.75,   1) edge ++(6.5, 0)
                                   (1.75,   2) edge ++(6.5, 0)
                                   (1.75,   3) edge ++(6.5, 0);
        \path[thick,blue] (2, 3) edge (4, 3)
                          (4, 2) edge (6, 2)
                          (6, 1) edge (8, 1);

        \node[right,align=left,smalltext] (x) at (4.3,3.8)
            {Primary $\Replica_2$ determines proposal to extend\\
             based on incoming \MName{Sync} messages.};

        \draw[thin,draw=black!75] (2, 0) edge ++(0, 3)
                                  (4,   0) edge ++(0, 3)
                                  (6, 0) edge ++(0, 3)
                                  (8,   0) edge ++(0, 3);

        \draw[thin,draw=black!50] (3, 0) edge ++(0, 3)
                                  (5,   0) edge ++(0, 3)
                                  (7, 0) edge ++(0, 3);

        \node[left] at (1.8, 0) {$\Replica_4$};
        \node[left] at (1.8, 1) {$\Replica_3$};
        \node[left] at (1.8, 2) {$\Replica_2$};
        \node[left] at (1.8, 3) {$\Replica_1$};

        \path[->] (2, 3) edge (3, 0)
                         edge (3, 1)
                         edge (3, 2);

        \path[->] (3, 0) edge (4, 1) edge (4, 2) edge (4, 3)
                  (3, 1) edge (4, 0) edge (4, 2) edge (4, 3)
                  (3, 2) edge (4, 0) edge (4, 1) edge (4, 3);

        \path[->] (4, 2) edge (5, 0)
                         edge (5, 1)
                         edge (5, 3);

        \path[->] (5, 0) edge (6, 1) edge (6, 2) edge (6, 3)
                  (5, 1) edge (6, 0) edge (6, 2) edge (6, 3)
                  (5, 3) edge (6, 0) edge (6, 1) edge (6, 3);

        \path[->] (6, 1) edge (7, 0)
                         edge (7, 2)
                         edge (7, 3);

        \path[->] (7, 0) edge (8, 1) edge (8, 2) edge (8, 3)
                  (7, 2) edge (8, 0) edge (8, 1) edge (8, 3)
                  (7, 3) edge (8, 0) edge (8, 1) edge (8, 2);

        \node[dot,black] at (2,3) {};
        \node[dot,black] at (4,2) {} edge[thin,black] (4.3,3.8);
        \node[dot,black] at (6,1) {};

        \node[below,smalltext] at (2.5, 0) {\MName{Propose}};
        \node[below,smalltext] at (4.5, 0) {\MName{Propose}};
        \node[below,smalltext] at (6.5, 0) {\MName{Propose}};
        \node[below,smalltext] at (3.5, 0) {\MName{Sync}};
        \node[below,smalltext] at (5.5, 0) {\MName{Sync}};
        \node[below,smalltext] at (7.5, 0) {\MName{Sync}};

        \draw[decoration={brace,amplitude=0.5em,mirror},decorate,thick]
                (2.05, -0.6) -- node[below=3pt,align=center,smalltext] {view $\v-1$,\\proposal $\PP_1$,\\ \footnotesize{{\emph conditional prepare} $\PP_1$} \\ \footnotesize{$\PP' \gets \PP_1$}} (3.95, -0.6);
        \draw[decoration={brace,amplitude=0.5em,mirror},decorate,thick]
                (4.05, -0.6) -- node[below=3pt,align=center,smalltext] {view $\v$,\\proposal $\PP_2$,\\ \footnotesize{{\emph conditional commit} $\PP_1$} \\ \footnotesize{$\PP' \gets \PP_2$} \\ \footnotesize{$\PP_{lock} \gets \PP_1$}} (5.95, -0.6);
        \draw[decoration={brace,amplitude=0.5em,mirror},decorate,thick]
                (6.05, -0.6) -- node[below=3pt,align=center,smalltext] {view $\v+1$,\\proposal $\PP_3$,\\ \small{{\emph commit} $\PP_1$} \\ \footnotesize{$\PP' \gets \PP_3$}} (7.95, -0.6);
    \end{tikzpicture}
    \caption{\Changed{A schematic representation of the normal-case replication protocol in a chained consensus instance of \SpotLess{} in three consecutive views $\v - 1$ (primary $\Replica_1$), $\v$ (primary $\Replica_2$), and $\v + 1$ (primary $\Replica_3$). \Changed{$\PP'$ refers to the \emph{highest extendable proposal}}. We have \emph{not} visualized those messages one can eliminate (see Remark~\ref{rem:redun}).}}
\label{fig:normal_case}
\end{figure}
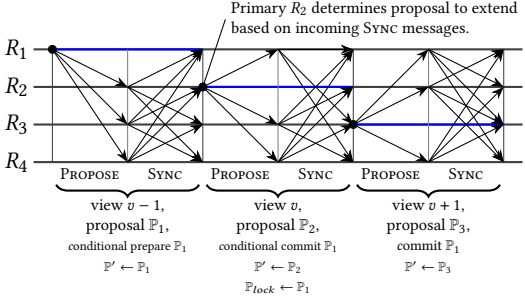

\Changed{\SpotLess{} adopts a three-phase commit algorithm. Each phase takes one view, and \SpotLess{} rotates the primary view by view, to eliminate complex failure detection and recovery.} A successful first phase of \SpotLess{} establishes a \emph{conditional prepare} that ensures non-divergence of proposals within the view; a successful second phase of \SpotLess{} establishes a \emph{conditional commit}; and a successful third phase achieves a \emph{commit} that ensures the preservation of proposals across views. We explain the conditions to establish these three proposal states (i.e., \emph{conditional prepare}, \emph{conditional commit}, and \emph{commit}) later in Definition~\ref{def:accept_lock}. In Figure~\ref{fig:normal_case}, we have visualized three \emph{consecutive} operations of a chained \Changed{rotational} consensus instance in views $\v-1$, $\v$, and $\v+1$ with respect to a proposal \Changed{$\Changed{\PP}$} and in Figure~\ref{alg:normal_case}, we present the pseudo-code of the normal-case operations of the chained consensus.

The normal-case replication protocol of \SpotLess{} only establishes a minimal guarantee on the overall state of the system that can be proven with a straightforward quorum-based argument~\cite{book}:
\begin{theorem}\label{thm:unique}

Consider view $\v$ of the normal-case replication of a chained consensus instance of \SpotLess{} and consider 
two replicas $\Replica_1, \Replica_2 \in \Replicas$. If $\n > 3\f$ and replica $\Replica_i$, $i \in \{1,2\}$, receives a set of authenticated messages $\{ \Message{Sync}{\v,\PPf{\PP{}_{i,\Replica[q]}}} \mid \Replica[q] \in \Replicas[q]_i \}$ from a set $\Replicas[q]_i \subseteq \Replicas$ of $\abs{\Replicas[q]_i} = \n - \f$ replicas, 
and all $\PPf{\PP{}_{i,\Replica[q]}}$, $i\in \{1,2\}$, represent proposal $\PP_i$, then $\PP_1 = \PP_2$.
\end{theorem}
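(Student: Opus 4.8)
The plan is to run the classical quorum–intersection argument on the two $(\n-\f)$-sized sets $\Replicas[q]_1$ and $\Replicas[q]_2$ of replicas whose \MName{Sync} messages were received by $\Replica_1$ and $\Replica_2$, respectively. First I would bound the intersection: since $\Replicas[q]_1, \Replicas[q]_2 \subseteq \Replicas$ with $\abs{\Replicas} = \n$ and $\abs{\Replicas[q]_1} = \abs{\Replicas[q]_2} = \n - \f$, inclusion–exclusion gives $\abs{\Replicas[q]_1 \intersect \Replicas[q]_2} \geq 2(\n - \f) - \n = \n - 2\f$. The assumption $\n > 3\f$ then yields $\abs{\Replicas[q]_1 \intersect \Replicas[q]_2} \geq \f + 1$, so the intersection contains at least one \emph{non-faulty} replica, which I call $\Replica[q]$ (among any $\f+1$ replicas, at most $\f$ can be faulty).

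The second step is to argue that $\Replica[q]$ contributed the \emph{same} claim to both $\Replica_1$ and $\Replica_2$. Because $\Replica[q]$ is non-faulty, it follows the protocol of Figure~\ref{alg:normal_case} and therefore broadcasts a single \MName{Sync} message in view $\v$ carrying a single claim $\PPf{\cdot}$ (either the claim for the unique well-formed proposal it recorded in view $\v$, or the empty claim $\PPf{\varnothing}$); note here that the rule that re-broadcasts a \MName{Sync} message upon receiving $\f+1$ matching ones is guarded by ``$\Replica$ has not sent any \MName{Sync} messages in view $\v$'', so no second, different claim is ever emitted. Moreover, since $\Replica[q]$ attaches both a \MAC{} and its digital signature $\Cert{\RVar{ms}{\Replica[q]}}{\Replica[q]}$ to this message, our authentication assumptions guarantee that the \MName{Sync} message attributed to $\Replica[q]$ that each of $\Replica_1$ and $\Replica_2$ receives is genuinely the one $\Replica[q]$ sent — faulty replicas can neither impersonate $\Replica[q]$ nor tamper with the forwarded, signed message (see Section~\ref{ss:rvs}). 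Hence $\PPf{\PP'{}_{1,\Replica[q]}} = \PPf{\PP'{}_{2,\Replica[q]}}$, i.e., $\PP'{}_{1,\Replica[q]} = \PP'{}_{2,\Replica[q]}$.

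Finally I would chain the hypothesized equivalences through $\Replica[q]$: $\PPf{\PP_1}$ is equivalent to $\PPf{\PP'{}_{1,\Replica[q]}}$, which equals $\PPf{\PP'{}_{2,\Replica[q]}}$, which is equivalent to $\PPf{\PP_2}$. Since a claim has the form $(\v, \digest{m}, \Cert{m}{\Primary{}})$, equivalence of $\PPf{\PP_1}$ and $\PPf{\PP_2}$ forces the two claims to agree on their digest component; by collision resistance of the hash function this pins down a unique underlying proposal, so $\PP_1 = \PP_2$, as required.

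The only real subtlety — and the step I expect to need the most care — is the second one: making precise what ``equivalent claim representation'' means and verifying that the protocol genuinely never lets a non-faulty replica produce two distinct claims in view $\v$, together with correctly invoking the authentication assumptions so that the (permitted) forwarding of \MName{Sync} messages cannot be exploited by faulty replicas. The counting in the first step and the hash-collision argument in the last step are routine.
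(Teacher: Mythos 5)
Your proposal is correct and is precisely the ``straightforward quorum-based argument'' the paper invokes (the paper states Theorem~\ref{thm:unique} without writing out a proof): two quorums of size $\n-\f$ intersect in at least $\n-2\f \geq \f+1$ replicas, hence in a non-faulty replica, which by the protocol emits only one authenticated claim per view, forcing $\PPf{\PP_1}=\PPf{\PP_2}$ and thus $\PP_1=\PP_2$. Your attention to the guard ``has not sent any \MName{Sync} messages in view $\v$'' and to the authentication assumptions correctly discharges the only non-routine steps.
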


\begin{proof}
    We prove the theorem by contradiction. Assume $\PP_1 \neq \PP_2$. Replica $\Replica_i$ received claims representing $\PP_i$ from the $\n-\f$ in  $\Replicas[q]_i$. As there are at most $\f$ faulty replicas, at-least $\n-2\f$ replicas in $\Replicas[q]_i$ are non-faulty. Let $S_i \subseteq \Replicas[q]_i$ be all non-faulty replicas in $\Replicas[q]_i$.
    
    As the non-faulty replicas in $S_1$ made claims representing $\PP_1$, the non-faulty replicas in $S_2$ made claims representing $\PP_2$, $\PP_1 \neq \PP_2$, and non-faulty replicas only make a single claim in view $v$ (see Figure~\ref{alg:normal_case}), we must have $S_1 \intersect S_2 = \emptyset$. Hence, we have $\abs{S_1 \union S_2} \geq 2(\n-2\f)$. As all replicas in  $ S_1 \union S_2$ are non-faulty, we also have $\abs{ S_1 \union S_2} \leq \n-f$. Hence, $2(\n - 2\f) \leq \n-\f$ must hold, which implies $\n \leq 3\f$, a contradiction. Hence, by contradiction, we conclude $\PP_1 = \PP_2$.
\end{proof}

\subsection{Rules Guaranteeing Safety}\label{ss:rules}

Beside the minimal guarantee in Theorem~\ref{thm:unique}, \SpotLess{} ensures the safety of the system via a set of rules
that non-faulty replicas must follow. In the meantime, the rules play an important role in helping restore liveness while
ensuring safety. Before exploring the rules, we first introduce the necessary terminology.

\begin{definition}\label{def:accept_lock}
Let $\Changed{\PP} \GETS \Message{Propose}{\v,\T,\PPproof{\Changed{\PP'}}}$ be a well-formed proposal in view $\v$. We say
that $\Changed{\PP'}$ is the \emph{preceding proposal} of $\Changed{\PP}$. For any two proposals $\PP_1$ and $\PP_2$, we say that $\PP_1$ \emph{precedes} $\PP_2$ if $\PP_1$ is the preceding proposal of $\PP_2$ or if there exists a
proposal $\Changed{\PP^{*}}$ such that $\PP_1$ precedes $\Changed{\PP^{*}}$ and $\Changed{\PP^{*}}$ is the preceding proposal of $\PP_2$. Let $\Precedes{\Changed{\PP}}$ be the set of all proposals that precede $\Changed{\PP}$. The \emph{depth} of proposal $\PP$ is defined by $\Depth{\PP} = \abs{\Precedes{\PP}}$.

We say that a replica \emph{records} $\Changed{\PP}$ if it determines that $\Changed{\PP}$ is well-formed (Line~\ref{alg:normal_case:recv} in Figure~\ref{alg:normal_case}) and say that it \emph{accepts} $\Changed{\PP}$ if it broadcasts \MName{Sync} messages with $\PPf{\Changed{\PP}}$.

We say that a replica \emph{conditionally prepares} $\Changed{\PP}$ if the replica received $\Changed{\PP}$ and, during view $\v$, \Changed{the replica receives $\n-\f$ concurring votes for $\Changed{\PP}$, i.e.} a set of messages $\{ \Message{Sync}{\v,\PPf{\Changed{\PP}_{\Replica[q]}},\AcceptedSet} \mid \Replica[q] \in \Replicas[q] \}$ with $\Changed{\PP}_{\Replica[q]} = \Changed{\PP}$ from a set $\Replicas[q] \subseteq \Replicas$ of $\abs{\Replicas[q]} = \n - \f$ replicas. We say that a replica \emph{conditionally commits} $\Changed{\PP}$ if, in a future view $\v[w] > \v$, the replica \emph{conditionally prepares} a proposal of the form $\Changed{\PP'} \GETS \Message{Propose}{\v[w], \T', \PPproof{\Changed{\PP}}}$ that extends $\Changed{\PP}$. We say that a replica \emph{locks} $\Changed{\PP}$ if $\Changed{\PP}$ is the highest proposal that it
\emph{conditionally commits}, denoted by $\PP{}_{\text{lock}}$. Also, we say that a replica \emph{commits} $\Changed{\PP}$ if, in a future view $\v[u] > \v$, the replica \emph{conditionally prepares} a proposal of the form
$\Changed{\PP''} \GETS \Message{Propose}{\v[u], \T'', \PPproof{\Changed{\PP'}}}$ that extends $\Changed{\PP'}$, with $u=w+1=\v+2$. \Changed{We say that two proposals are conflicting if the preceding proposals of these two proposals are disjoint.}

\end{definition}

\begin{figure}[t!]
    \begin{myprotocol}
    \STATE Let $\Primary{} \in \Replicas$ be the replica with $\v = \ID{\Primary{}} \bmod \n$ (the primary).
    \SPACE

    \FUCTION{Acceptable}{$\PP \GETS \Message{Propose}{\v,\T,\PPproof{\PP'}}$}
        \STATE Let $\PP_{\text{lock}} \GETS \Message{Propose}{\v_{\text{lock}},\T_{\text{lock}},\PPproof{\PP^{*}}}$ be the highest proposal that this replica conditionally committed.
        \RETURN $\Replica{}$ \emph{conditionally prepared} proposal $\PP'$ and either\\
\phantom{\textbf{return} }$\v'<v_{\text{lock}}$ or $\PP_{\text{lock}} \in (\{\PP'\} \union precedes(\PP'))$.
    \ENDFUNCTION
    \SPACE
    \FUCTION{HighestExtendable}{}
    \FOR{$\v$ from \textit{CurrentView} down to $0$}
        \IF{$\Primary{}$ \emph{conditionally prepared} proposal $\PP'$ of view $\v$}
            \IF{$\Primary{}$ has a valid $\PPproof{\PP'}$}
                \RETURN $\PP', \PPproof{\PP'}$.
            \ELSIF{$\Primary{}$ receives $\Message{Sync}{\v_i, \PPf{}_i, \AcceptedSet_i}$ from\\
                \phantom{\textbf{else if} }$\n-\f$ replicas $\Replica_i$ with $\PPf{\PP'} \in \AcceptedSet_i$}
                \RETURN $\PP', \PPf{\PP'}$.
            \ENDIF
        \ENDIF
    \ENDFOR
    \ENDFUNCTION
    \SPACE

    \TITLE{Primary role}{running at the primary $\Primary{}$ of view $\v$}
    \STATE $\PP, \text{cc} \GETS \Name{HighestExtendable}()$.
    \STATE Awaits receipt of a valid client request $\Cert{\T}{\Client}$.
    \STATE Broadcasts $\Message{Propose}{\v,\T, \text{cc}}$ to all replicas.\label{alg:normal_case:broadcast_proposal}
    \SPACE

    \TITLE{Backup role}{running at each replica $\Replica \in \Replicas$}
    \EVENT{$\Replica$ receives a \emph{well-formed} proposal $\PP$}
        \IF{$\Replica$ has not sent \MName{Sync} message in view $\v$
            and\\
\phantom{\textbf{if} }            \Name{Acceptable}($\PP$)}
            \STATE Broadcasts $\Message{Sync}{\v,\PPf{\PP}, \AcceptedSet}$ to all replicas.\label{alg:normal_case:recv}
        \ENDIF
    \ENDEVENT
    \SPACE

    \EVENT{$\Replica$ determines a failure in view $\v$}
        \STATE Broadcasts $\Message{Sync}{\v, \PPf{\varnothing}, \AcceptedSet}$ to all replicas.
    \ENDEVENT
    \SPACE

    \EVENT{$\Replica$ receives $\Message{Sync}{\v,\PPf{\PP}, \AcceptedSet}$ from\\
    \phantom{\textbf{event} }$\n-\f$ replicas}
        \STATE \emph{Conditionally prepares} $\PP$.
    \ENDEVENT
    \SPACE

    \EVENT{$\Replica$ receives $\Message{Sync}{\v',\PPf{}, \AcceptedSet}$ messages with\\
        \phantom{\textbf{event} }$\v' > v$ and $\PPf{\PP} \in \AcceptedSet$ from $\f+1$ replicas}
        \STATE \emph{Conditionally prepares} $\PP$.
    \ENDEVENT
    \SPACE

    \EVENT{$\Replica$ receives $\Message{Sync}{\v,\PPf{\PP}, \AcceptedSet}$ from\\
    \phantom{\textbf{event} }$\f+1$ replicas}
    \IF{$\Replica$ has not sent \MName{Sync} message in view $\v$}
    \STATE Broadcasts $\Message{Sync}{\v, \PPf{\PP}, \AcceptedSet}$ to all replicas.
    \ENDIF
    \IF{$\Replica$ does not know $\PP$}
    \STATE Send $\Message{Ask}{\v,\PPf{\PP}}$ to the $\f+1$ replicas.
    \ENDIF
    \ENDEVENT
    \SPACE

    \EVENT{$\Replica$ receives $\Message{Ask}{\v,\PPf{\PP}}$ from $\Replica{}'$ and\\
    \phantom{\textbf{event} }$\Replica$ has \emph{recorded} $\PP$}
        \STATE Send $\PP$ to $\Replica{}'$.
    \ENDEVENT
    \SPACE
    \end{myprotocol}
    \caption{\Changed{The replication protocol in a \SpotLess{} instance.}}\label{alg:normal_case}
\end{figure}

The proposal states \emph{conditionally committed} and \emph{committed} are analogous to the proposal states
\emph{prepared} and \emph{committed} in traditional non-chained protocols such as \PBFT{}~\cite{pbftj}. In
Figure~\ref{fig:normal_case}, we show how a proposal establishes the three states in the normal case. Using
this terminology, we can specify the safety guarantee that individual \SpotLess{} instances will maintain on the
system: \emph{no two conflicting proposals $\PP_{i}$ and $\PP_{j}$ can be both committed, each by a non-faulty replica},
which we will prove in Theorem~\ref{thm:safety}.

\Changed{To guarantee safety,} central to the design principle are the rules that non-faulty replicas follow when deciding whether to \emph{extend, accept}, or \emph{conditionally prepare} a proposal.

The primary $\Primary{}$ can construct a \emph{certificate} for proposal $\PP$ after $\Primary{}$
\emph{recorded} $\PP$ and received \Changed{$\n-\f$ \MName{Sync} messages with valid signatures for $\PP$, i.e.}
\[S = \{ \Message{Sync}{\v-1,\PPf{\PP{}_{\Replica[q]}},\AcceptedSet} \mid \Replica[q] \in \Replicas[q] \} \]
with $\PP{}_{\Replica[q]} = \PP$ and a valid signature from a set $\Replicas[q] \subseteq \Replicas$ of
$\abs{\Replicas[q]} = \n - \f$ replicas. The set $S$ will be used to construct the certificate. Even if $\Replica{}$ fails to receive sufficient \MName{Sync} messages
to \emph{conditionally prepare} $\PP$ in view $\v-1$, $\Replica{}$ will \emph{conditionally prepare} $\PP$ if
it receives a valid \emph{certificate} \Changed{$\PPproof{\PP}$}.

\Changed{Each \MName{Sync} message includes $\AcceptedSet$ that consists of the views and digests of the sender's
$\PP{}_{\text{lock}}$ and all \emph{conditionally prepared} proposals with a higher view than the view \Changed{$v_{\text{lock}}$} of proposal $\PP{}_{\text{lock}}$}:\Changed{
\[\AcceptedSet \GETS \{v_{\PP}, \digest{\PP} \mid \text{$\PP$ is conditionally prepared} \land v_{\text{lock}} \le v_{\PP}\}.\]}
$\Replica{}$ \emph{conditionally prepares} $\PP$ if it receives a set $S'$of \MName{Sync} messages from $\f+1$ 
replicas claiming to have \emph{conditionally prepared} $\PP$, \Changed{which implies at least one non-faulty replicas have \emph{conditionally prepared} $\PP$ after receiving $\n-\f$ concurring votes}, where
\[S' = \{ \Message{Sync}{w_{Q'},\PPf{\PP{}_{\Replica[q']}},\AcceptedSet_{Q'}} \mid \Replica[q'] \in \Replicas[q'] \} \]
with $w_{Q'}\ge \v$, $\PP \in \AcceptedSet_{Q'}$, and $\Replicas[q'] \subseteq \Replicas$ with $\abs{\Replicas[q']} = \f + 1$.

A non-faulty primary \Changed{of view $v$} considers a proposal $\PP'$ to be \emph{extendable} if either of the following conditions is met:
\begin{enumerate}
    \renewcommand{\labelenumi}{E\arabic{enumi}}
    \item \label{c1:proof-exist}$\Primary{}$ has a valid \emph{certificate} for $\PP'$; 
    \item \label{c2:nf-claim}\Changed{$\Primary{}$ has received a set of \MName{Sync} messages from $\n-\f$ replicas 
    that claim to have \emph{conditionally prepared} $\PP'$, i.e.
    \[\{ \Message{Sync}{w_{Q},\PPf{\PP{}_{\Replica[q]}},\AcceptedSet_{Q}} \mid \Replica[q] \in \Replicas[q] \}\] 
    with $w_{Q}$$<$$\v$ and
    $\PP'\in\AcceptedSet_{Q'}$ and $\Replicas[q] \subseteq \Replicas$ with $\abs{\Replicas[q]}=\n-\f$.}
\end{enumerate}

The primary $\Primary{}$ backtracks to earlier views to find the highest \emph{extendable} proposal \Changed{$\PP'$} and
then sets the preceding proposal to \Changed{$\PP'$}. \Changed{If $\PP'$ satisfies E\ref{c1:proof-exist}, then $\Primary{}$ broadcasts $\PP \GETS \Message{Propose}{\v,\T,\PPproof{\PP'}}$. Otherwise, $\Primary{}$ broadcasts $\PP \GETS \Message{Propose}{\v,\T,\PPf{\PP'}}$.}

When receiving a well-formed new proposal of the form
$\Changed{\PP} \GETS \Message{Propose}{\v,\T,\PPproof{\Changed{\PP'}}}$ or
$\Message{Propose}{\v,\T,\PPf{\Changed{\PP'}}}$, a replica $\Replica{}$ determines whether to \emph{accept} $\Changed{\PP}$ based on the following rules:

\begin{enumerate}
    \renewcommand{\labelenumi}{A\arabic{enumi}}
    \item \label{e0:validity} Validity Rule: $\Replica$ has \emph{conditionally prepared} $\Changed{\PP'}$.
    \item \label{e1:extending-lock} Safety Rule: $\Changed{\PP'}$ extends $\Replica$'s locked proposal $\PP_{\text{lock}}$, \Changed{i.e.} \\$\PP_{\text{lock}} \in (\{\Changed{\PP'}\} \union \Precedes{\Changed{\PP'}})$.
    \item \label{e2:higher-view} Liveness Rule: $\Changed{\PP'}$ has a higher view than $\PP_{\text{lock}}$.
\end{enumerate}

If \Changed{A\ref{e0:validity} holds} and either A\ref{e1:extending-lock} or A\ref{e2:higher-view} holds,
then $\Replica{}$ broadcasts $\Message{Sync}{\v, \PPf{\Changed{\PP}},\AcceptedSet}$. Otherwise, $\Replica{}$
keeps waiting for a proposal satisfying the acceptance requirement until its timer expires.

Due to unreliable communication or faulty behavior, \Changed{non-faulty replica $\Replica$ may fail to receive any \emph{acceptable} proposal from primary $\Primary{\v}$ but receive a set $M'$ consisting of $\f+1$ \MName{Sync} messages with the same $\PPf{\Changed{\PP}}$,}
formally, $\Replica{}$ receives \[M' = \{ \Message{Sync}{\v,\PPf{\PP{}_{\Replica[q^m]}},\AcceptedSet_{Q^{m}}} \mid \Replica[q^m] \in \Replicas[q^m] \} \]
with $\PP{}_{\Replica[q^m]} = {\Changed{\PP}}$ from a set $\Replicas[q^m] \subseteq \Replicas$ of $\abs{\Replicas[q^m]} = \f+1$ replicas.
For easier restoration of liveness, \SpotLess{} allows $\Replica{}$ to broadcast $\Message{Sync}{\v, \PPf{\Changed{\PP}},\AcceptedSet}$ if $\Replica{}$ considers \Changed{$\PP$ as} \emph{acceptable}.

In such a case, $\Replica{}$ is unaware of the full information of $\Changed{\PP}$ and needs to catch up. To do so,
$\Replica{}$ sends $\ma \GETS \Message{Ask}{\v,\PPf{\Changed{\PP}}}$ to the $\f+1$ replicas in $\Replicas[q^m]$. After a good replica $\Replica' \in \Replicas[q^m]$ receives $\ma$, the replica $\Replica{}'$ will forward $\Changed{\PP}$ to $\Replica{}$ if it has \emph{recorded} a well-formed $\Changed{\PP}$. To reduce the overhead of this mechanism in practical implementations, replicas can choose to first send \MName{Ask} messages to replicas they already trust (e.g., based on previous behavior).

\Changed{Based on the design principles above, we can prove the safety property of \SpotLess{} step by step:}

\begin{lemma}\label{lemma:qaccept}
    If a non-faulty replica $\Replica{}$ \emph{conditionally prepares} $\Changed{\PP} = \Message{Propose}{\v,\T,\PPproof{\Changed{\PP'}}}$, then
    for each proposal $\Changed{\PP^{*}} \in \Precedes{\Changed{\PP}}$ that precedes $\Changed{\PP}$, at least $\n-2\f\geq \f+1$ non-faulty replicas have \emph{conditionally prepared} $\Changed{\PP^{*}}$ and sent \MName{Sync} messages with $\Changed{\PP^{*}} \in \AcceptedSet$.
\end{lemma}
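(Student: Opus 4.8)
The plan is an induction on the depth $\Depth{m} = \abs{\Precedes{m}}$. For $\Depth{m} = 0$ we have $\Precedes{m} = \varnothing$ and the claim is vacuous. For the inductive step, let $\PP_0$ denote the preceding proposal of $m$, so that $\Precedes{m} = \{\PP_0\} \union \Precedes{\PP_0}$ and $\Depth{\PP_0} = \Depth{m} - 1$. The statement then reduces to the following \emph{key claim}: if a non-faulty replica conditionally prepares $m$, then at least $\f+1$ non-faulty replicas have conditionally prepared $\PP_0$ and have each broadcast a \MName{Sync} message with $\PP_0 \in \AcceptedSet$. Granting the key claim, at least one non-faulty replica conditionally prepares $\PP_0$, so the induction hypothesis applied to $\PP_0$ handles every $\PP' \in \Precedes{\PP_0}$; together with the key claim for $\PP_0$ itself, this covers all of $\Precedes{m}$.

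For the key claim I would trace conditional preparation of $m$ back to its origin. A non-faulty replica conditionally prepares $m$ only after collecting either (i) $\n - \f$ \MName{Sync} messages carrying $\PPf{m}$; or (ii) a valid certificate $\PPproof{m}$, which --- since digital signatures are unforgeable --- is itself assembled from $\n - \f$ genuine such \MName{Sync} messages; or (iii) $\f+1$ \MName{Sync} messages listing $m$ in their $\AcceptedSet$. Consider the causally-first event of the execution in which a non-faulty replica conditionally prepares $m$ (the happens-before order on execution events is well-founded, so this is well-defined whenever the premise holds). This event cannot be of type (iii): a non-faulty replica lists $m$ in its $\AcceptedSet$ only after it has conditionally prepared $m$, which would be a strictly earlier event of the same kind. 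Hence at this first event there is a set $Q$ with $\abs{Q} = \n - \f$ of replicas that each broadcast a \MName{Sync} message carrying $\PPf{m}$, i.e.\ that each \emph{accepted} $m$; since $\n > 3\f$, at least $\n - 2\f \ge \f+1$ of them are non-faulty. Fix such a non-faulty $\Replica[q] \in Q$. By the acceptance rules, $\Replica[q]$ broadcasts a \MName{Sync} message carrying $\PPf{m}$ only if it satisfied the Validity Rule~\ref{e0:validity}, so $\Replica[q]$ has conditionally prepared $\PP_0$; and only if it satisfied the Safety Rule~\ref{e1:extending-lock} or the Liveness Rule~\ref{e2:higher-view}, each of which forces $\view{\PP_{lock}} \le \view{\PP_0}$ at that moment (using that views strictly increase along a chain of proposals, so that $\PP_{lock} \in \PP_0 \union \Precedes{\PP_0}$ implies $\view{\PP_{lock}} \le \view{\PP_0}$). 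By the definition of $\AcceptedSet$, the \MName{Sync} message $\Replica[q]$ broadcasts for $m$ then lists $\PP_0$, and we have exhibited $\f+1$ non-faulty replicas with the required properties.

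The main obstacle is the amplification/echo rule for \MName{Sync} messages: on receiving $\f+1$ claims for $m$, a replica that has not yet broadcast a \MName{Sync} message during its current view rebroadcasts $\PPf{m}$ (and an \MName{Ask} for $m$), and such a replica need not itself have accepted $m$ through the Backup role. One must therefore verify carefully that behind any conditional preparation of $m$ there are at least $\f+1$ non-faulty replicas that \emph{genuinely} accepted $m$ (and hence, by the argument above, list $\PP_0$ in their $\AcceptedSet$). Anchoring the argument at the causally-first non-faulty conditional preparation of $m$ --- where the indirect rule (iii) cannot fire and an echoed claim must ultimately be backed by genuine acceptances or by $m$'s own certificate for $\PP_0$ --- is the delicate point. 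The remaining ingredients (the quorum count $\n - 2\f \ge \f+1$, the view-monotonicity giving $\PP_0 \in \AcceptedSet$, and the outer depth induction) are routine.
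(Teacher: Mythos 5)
Your argument follows essentially the same route as the paper's proof: both anchor at the first non-faulty replica to conditionally prepare $\m$, observe that this first preparation can only arise from $\n-\f$ \MName{Sync} messages carrying $\PPf{\m}$ (the $\AcceptedSet$-based route cannot fire first, and a certificate is just such a quorum in disguise), extract $\f+1$ non-faulty senders, invoke the Validity Rule~\ref{e0:validity} to conclude that each has conditionally prepared the preceding proposal, use the definition of $\AcceptedSet$ (together with the view comparison you spell out via Rules~\ref{e1:extending-lock} and~\ref{e2:higher-view}) to conclude that each lists it, and then recurse. Your version is more explicit than the paper's about the outer induction on depth and about why the indirect route cannot be causally first, which the paper dismisses with ``obviously.''

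The one place your write-up does not close is exactly the point you flag: the echo rule. Anchoring at the causally-first conditional preparation rules out your route (iii) for the \emph{preparation} event itself, but it does not prevent the quorum of $\n-\f$ \MName{Sync} messages from containing echoes by non-faulty replicas that never checked Rule~\ref{e0:validity}: a single genuine acceptance plus $\f$ faulty claims already suffices to seed a cascade of echoes, so ``every echo is ultimately backed by a genuine acceptance'' only yields \emph{one} genuine acceptor, not $\f+1$. The resolution is not causal anchoring but a design invariant stated in the prose of Section~\ref{ss:rules}: a replica re-broadcasts $\Message{Sync}{\v,\PPf{\m},\AcceptedSet}$ upon receiving $\f+1$ matching claims \emph{only if it considers $\m$ acceptable}, i.e., the echo path is still gated on Rule~\ref{e0:validity}. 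Under that reading every non-faulty sender of $\PPf{\m}$ has conditionally prepared the preceding proposal and your count goes through; the paper's own proof silently relies on the same invariant when it asserts that ``a non-faulty replica will not accept $m$ if it has not conditionally prepared $\PP$.''
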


\begin{proof}
    Assume that $\Replica{}$ is the first non-faulty replica that \emph{conditionally prepares} $\PP$. The only way for $\Replica{}$ to \emph{conditionally prepare} $\PP$ is to receive $\n-\f$ \MName{Sync} messages of the form $\Message{Sync}{\v,\PPf{\PP}, \AcceptedSet}$, of which $\n-2\f\geq \f+1$ are from non-faulty replicas. Hence, at least $\n-2\f$ non-faulty replicas must have \emph{conditionally prepared} the preceding proposal $\PP'$ of $\PP$, as non-faulty replica only accept $\PP$ if they \emph{conditionally prepared} $\PP'$. From the definition of
    $\AcceptedSet$ we know that these $\n-2\f$ non-faulty replicas must have send \MName{Sync} messages with
    $\PP' \in \AcceptedSet$ after they \emph{conditionally prepare} $\PP'$. Next, we can bootstrap this argument to show that the above statement holds for each proposal $\PP^{*} \in \Precedes{\PP}$.
\end{proof}

Using Lemma~\ref{lemma:qaccept}, we are able to prove safety:

\begin{theorem}\label{thm:safety}
    No two non-faulty replicas can commit conflicting proposals $\PP_{i}$ and $\PP_{j}$.
\end{theorem}

\begin{proof}
    We shall prove this theorem by contradiction. Assume that both $\PP_i$ and $\PP_j$ are \emph{committed} by (possibly distinct) non-faulty replicas and that $\PP_i$ and $\PP_j$ are conflicting. Without loss of generality, we can assume that proposal $\PP_i$ is from view $v_i$, proposal $\PP_j$ is from view $v_j$, and $\Depth{\PP_j} \geq \Depth{\PP_i}$. As $\PP_i$ is \emph{committed}, there must exist proposals $\PP_{i+1}$ and $\PP_{i+2}$ of the form 
    \begin{align*}
    \PP_{i+1} &= \Message{Propose}{\v_{i}+1,\T_{i+1},\PPproof{\PP_{i}}};\text{ and}\\
    \PP_{i+2} &= \Message{Propose}{\v_{i}+2,\T_{i+2},\PPproof{\PP_{i+1}}}
    \end{align*}
    that have been \emph{conditionally prepared} by non-faulty replicas. Likewise, as $\PP_j$ is \emph{committed}, there must exists proposals $\PP_{j+1}$ and $\PP_{j+2}$ that have been \emph{conditionally prepared} by non-faulty replicas.
    
    Since $\PP_{j}$ conflicts $\PP_{i}$ and $\Depth{\PP_j} \geq \Depth{\PP_i}$, there must exists a depth $d$ and proposals $\PP_{d,i} \in \{ \PP_i \} \union \Precedes{\PP_i}$ and $\PP_{d,j} \in \{ \PP_j \} \union \Precedes{\PP_j}$ such that $\Precedes{\PP_{d,i}} = \Precedes{\PP_{d,j}}$ and $\PP_{d,i} \neq \PP_{d,j}$ (hence, the first proposals that precede $\PP_i$ and $\PP_j$, respectively, that are in conflict). As both $\PP_i$ and $\PP_j$ are \emph{committed}, some non-faulty replicas must have \emph{conditionally prepared} $\PP_{d,i}$ and $\PP_{d,j}$.
    
    From Lemma~\ref{lemma:qaccept} and the fact that $\PP_{i+2}$ and $\PP_{j+2}$ are \emph{conditionally prepared} by a non-faulty replica, we conclude that $\n-2\f\geq \f+1$ non-faulty replicas \emph{conditionally prepared} all proposals in $\Precedes{\PP_{i+2}}$ and in $\Precedes{\PP_{j+2}}$. Let $\PP' \in \Precedes{\PP_{j+2}}$ be the proposal with $\Depth{\PP_{i+1}} = \Depth{\PP'}$. By construction, $\PP_d \in \Precedes{\PP'}$. hence, $\PP'$ and $\PP_{i+1}$ are in conflict and are both \emph{conditionally prepared} by $\n-2\f\geq \f+1$ non-faulty replicas. Hence, by a similar quorum argument as used to prove Theorem~\ref{thm:unique}, we conclude there must be at least one non-faulty replica that \emph{conditionally prepared} both $\PP'$ and $\PP_{i+1}$. This non-faulty replica must have also locked on either $\PP_i$ (when it accepted $\PP_{i+1}$) or the proposal $\PP''$ preceding $\PP'$ (when it accepted $\PP'$). By construction, $\PP'' \neq \PP'$. Hence, one of these accept steps would violate the safety rule A2, a contradiction. 
\end{proof}

Theorem~\ref{thm:safety} proves \SpotLess{} can ensure safety if we have \emph{three-consecutive-view} requirement for 
committing a proposal. Next, we illustrate the necessity of the \emph{three-consecutive-view} requirement (over a two-consecutive-view requirement) for \SpotLess{}.

\begin{example}
Assume a system with $\n = 3\f+1$ replicas and that all replicas have \emph{conditionally prepared} \Changed{$\PP_0$}. We describe events that happen in the next six views to illustrate how two conflicting proposals can get committed if we relax the \emph{three-consecutive-view} requirement.

    \begin{enumerate}
        \item Primary $\Primary{1}$ broadcasts proposal $\PP_1$ extending $\PP_0$. All replicas \emph{accept} $\PP_1$ and then \emph{conditionally prepare} $\PP_1$.
        \item Primary $\Primary{2}$ broadcasts proposal $\PP_2$ extending $\PP_0$. All replicas \emph{accept} $\PP_2$ and then \emph{conditionally prepare} $\PP_2$. 
        \item \Changed{Faulty primary $\Primary{3}$ sends proposal $\PP_3$ extending $\PP_2$ to $\f+1$ non-faulty replicas including $\Replica_0$ and sends proposal $\PP_3'$ extending $\PP_2$ to the other $\f$ non-faulty replicas. All $\f$ faulty replicas only send \MName{Sync} messages with $\PPf{\PP_3}$ to $\Replica_{0}$, due to which only $\Replica_{0}$ \emph{conditionally prepares} $\PP_3$, while other non-faulty replicas cannot \emph{conditionally prepare} it.
        \item $\Primary{4}$ broadcasts proposal $\PP_4$ extending $\PP_1$. All non-faulty replicas except $\Replica_{0}$ \emph{conditionally prepare} $\PP_4$.
        \item Faulty primary $\Primary{5}$ sends proposal $\PP_5$ extending $\PP_4$ to $\f+1$ non-faulty replicas including $\Replica_1$ and not including $\Replica_0$ and sends proposal $\PP_5$ extending $\PP_4$ to the other $\f$ non-faulty replicas. All $\f$ faulty replicas only send \emph{Sync} messages with $\PPf{\PP_5}$ to $\Replica_{1}$, due to which  only $\Replica_{1}$ \emph{conditionally prepares} $\PP_5$ and \emph{commits} $\PP_1$, while others cannot \emph{conditionally prepares} $\PP_5$.
        \item $\Primary{6}$ broadcasts proposal $\PP_6$ extending $\PP_3$ and all replicas except $\Replica_{1}$ broadcast \MName{Sync} messages with $\PPf{\PP_6}$. Then, all non-faulty replicas except $\Replica_{1}$ \emph{conditionally prepare} $\PP_6$ and \emph{commit} $\PP_2$ that conflicts with $\PP_1$, which was \emph{committed} by $\Replica_{1}$.}
 \end{enumerate}

    In this setting, using a two-consecutive-view requirement for committing, $\PP_1$ and $\PP_2$ are conflicting proposals but are both committed by non-faulty replicas.
\end{example}

\subsection{\Changed{Bootstraping Liveness with Rapid View Synchronization}}\label{ss:rvs}

\emph{Rapid View Synchronization}  (\RVS{}) bootstraps the guarantees provided by normal-case replication
toward providing consensus. \RVS{} does so by dealing with asynchronous communication and by strengthening
the guarantees on proposals of preceding views. \Changed{In specific, the main services provided by \RVS{} are a best-effort and quick \emph{view synchronization} to assure that replicas end up in the same views whenever communicaiton is sufficiently reliable and \emph{low-cost state recovery} to enable cheap primary rotation to deal with failures of previous primaries.}

To enable \emph{Rapid View Synchronization}, for each view $\v$, a replica must go through three
states one by one:
\begin{enumerate}
    \renewcommand{\labelenumi}{ST\arabic{enumi}}
    \item\label{state:recording} Recording: waiting for a well-formed $\PP$ that satisfies A\ref{e0:validity} and
    either \Changed{A\ref{e1:extending-lock} or A\ref{e2:higher-view}} until state timer $\timer{R}$ expires;
    \item\label{state:syncing} Syncing: waiting for a set of \MName{Sync} messages with view $\v$ from a set
    $\Replicas[q] \subseteq \Replicas$ with $\abs{\Replicas[q]} = \n-\f$ replicas;
    \item\label{state:certifying} Certifying: waiting for a set of messages \[S = \{ \Message{Sync}{\v,\PPf{\PP{}_{\Replica[q^A]}},\AcceptedSet} \mid \Replica[q^A] \in \Replicas[q^A] \}\]
    with the same claimed proposal $\PP{}_{\Replica[q^A]}$ from a set $\Replicas[q^A] \subseteq \Replicas$ of $\abs{\Replicas[q^A]} = \n-\f$ replicas until timer state $\timer{A}$ expires.
\end{enumerate}
Note that there is no timer for Syncing (ST\ref{state:syncing}) and receiving sufficient \MName{Sync} messages is the only
way to proceed to Certifying (ST\ref{state:certifying}) of the same view. Some replicas may fall behind due to unreliable communication, failing to receive sufficient \MName{Sync} messages
while other replicas have reached higher views. To quickly synchronize views, a replica $\Replica{}$ in view $\v$
is allowed to proceed to Syncing (ST\ref{state:syncing}) of view $w$ directly after receiving a set of messages $D$ with
views higher than or equal to $w$: \[D = \{ \Message{Sync}{\v',\PPf{\PP{}_{\Replica[q^d]}},\AcceptedSet_{Q^{d}}} \mid \Replica[q^d] \in \Replicas[q^d] \} \]
with $\v'\ge w>\v$ from a set $\Replicas[q^d] \subseteq \Replicas$ of $\abs{\Replicas[q^d]} = \f + 1$ replicas. \Changed{Receiving such $\f+1$ messages implies that one non-faulty replica has moved to view $w$ after receiving $\n-\f$ \MName{Sync} messages of view $w-1$, then $\Replica{}$ can skip to view $w$ directly knowing that at least majority of non-faulty replicas have observed the higher view $w-1$}. To catch up, replica $\Replica{}$ broadcasts message $S_u = \Message{Sync}{u,\PPf{\varnothing},\AcceptedSet, \Upsilon}$ for each view $u, \v \le u \le w$, in which $\Upsilon$ is a flag that asks replicas that receive $S_u$ to retransmit the \MName{Sync} messages to $\Replica{}$ they broadcast in view $u$. \Changed{With such a design, in \SpotLess{}, as long as network remains synchronous, replicas falling behind are capable of catching up actively and immediately, while in previous \emph{rotational} work such as \HS{}, \emph{view synchronization} is assumed by relying on the black-box Pacemaker}. In Figure~\ref{alg:rvs}, we present the pseudo-code of the Rapid View Synchronization part of \SpotLess{}. Due to Lemma~\ref{lemma:qaccept}, \Changed{we have the following:}

\begin{lemma}\label{lemma:accept_precede}
    Assume reliable communication. If replica $\Replica{}$ \emph{conditionally prepares} proposal $\PP$ by receiving $\f+1$
    \MName{Sync} messages with $\PPf{\PP} \in \AcceptedSet$, then eventually $\Replica{}$ will \emph{record} and
    \emph{conditionally prepare} all proposals in $\Precedes{\PP}$.
\end{lemma}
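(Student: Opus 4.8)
The plan is to bootstrap from Lemma~\ref{lemma:qaccept} using the bound $\f$ on faulty replicas and the reliable-communication assumption, together with the \MName{Ask}/forward mechanism for recovering full proposals. First I would exhibit a non-faulty \emph{witness}: since $\Replica$ \emph{conditionally prepares} $\PP$ after collecting $\f+1$ \MName{Sync} messages from $\f+1$ distinct replicas that list $\PP$ in their $\AcceptedSet$, and at most $\f$ of those replicas are faulty, at least one such message comes from a non-faulty replica $\Replica'$. By the definition of $\AcceptedSet$, a non-faulty replica lists $\PP$ in $\AcceptedSet$ only after it has \emph{conditionally prepared} $\PP$, so $\Replica'$ has \emph{conditionally prepared} $\PP$. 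Applying Lemma~\ref{lemma:qaccept} to $\Replica'$ then yields, for every $\PP' \in \Precedes{\PP}$, a set $N_{\PP'} \subseteq \Replicas$ with $\abs{N_{\PP'}} \ge \f+1$ of non-faulty replicas, each of which (i) has \emph{conditionally prepared} $\PP'$ and (ii) has broadcast a \MName{Sync} message carrying $\PP' \in \AcceptedSet$.

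Next I would fix an arbitrary $\PP' \in \Precedes{\PP}$ and argue that $\Replica$ eventually \emph{conditionally prepares} it: by reliable communication, all $\abs{N_{\PP'}} \ge \f+1$ of the \MName{Sync} messages from (ii) are eventually delivered to $\Replica$, and receiving $\f+1$ \MName{Sync} messages with $\PP' \in \AcceptedSet$ is precisely the trigger of the event handler in Figure~\ref{alg:normal_case} that makes $\Replica$ \emph{conditionally prepare} $\PP'$. To also get that $\Replica$ \emph{records} $\PP'$ (i.e., holds the well-formed \MName{Propose} message itself), I would invoke the \MName{Ask}/forward mechanism: from the $\AcceptedSet$ entries it has received, $\Replica$ learns $\view{\PP'}$ and $\digest{\PP'}$; when it does not yet have the full proposal it broadcasts an \MName{Ask} for $\PP'$, and each replica in $N_{\PP'}$ either already \emph{recorded} $\PP'$ or completes its own \MName{Ask} round under reliable communication, hence holds $\PP'$ and can forward it. Reliable communication guarantees the \MName{Ask} reaches such a replica, which forwards $\PP'$, so $\Replica$ \emph{records} $\PP'$.

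Finally, since $\Precedes{\PP}$ is finite (of size $\Depth{\PP}$) and $\PP'$ was arbitrary, after finitely many message deliveries $\Replica$ has \emph{recorded} and \emph{conditionally prepared} every proposal in $\Precedes{\PP}$. One could phrase this as an induction on $\Depth{\PP}$, but no induction along the chain is actually needed, since Lemma~\ref{lemma:qaccept} already quantifies over all of $\Precedes{\PP}$ at once.

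The main obstacle I anticipate is the gap between \emph{conditionally preparing} a proposal and \emph{recording} it: the $\f+1$-\MName{Sync} rule lets a replica \emph{conditionally prepare} $\PP'$ from the digest/view data in $\AcceptedSet$ alone, without ever having seen the corresponding \MName{Propose} message, so the proof must carefully route the recovery of the full proposal through the \MName{Ask}/forward path and argue that at least one honest holder of $\PP'$ is available to answer. A secondary point to pin down is that the \MName{Sync} messages furnished by Lemma~\ref{lemma:qaccept} were genuinely broadcast (so reliable delivery applies) and that any view-bookkeeping guarding the $\f+1$-\MName{Sync} conditional-prepare rule is satisfied by the time $\Replica$ collects them.
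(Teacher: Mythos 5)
Your proof is correct and follows the same skeleton as the paper's: invoke Lemma~\ref{lemma:qaccept} to obtain, for every $\PP' \in \Precedes{\PP}$, a set of at least $\f+1$ non-faulty replicas that have \emph{conditionally prepared} $\PP'$ and broadcast \MName{Sync} messages with $\PP' \in \AcceptedSet$, and then argue that these suffice to bring $\Replica$ up to date. The differences are in how the messages reach $\Replica$ and in the level of care. The paper's (very terse) proof routes everything through the $\Upsilon$-flagged resend mechanism of \RVS{}: the $\f+1$ non-faulty replicas \emph{re-send} their old \MName{Sync} messages in reply to $\Replica$'s own \MName{Sync}-with-$\Upsilon$, which is the mechanism that actually works when the original broadcasts predate the onset of reliable communication and may have been lost; your argument instead assumes the original broadcasts are themselves eventually delivered, which is fine under the lemma's literal hypothesis but fragile under the paper's partial-synchrony reading, so you should route the conditional-prepare step through the resend/$\Upsilon$ path as well. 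On the other hand, you are more rigorous than the paper in two places: the non-faulty-witness step that legitimizes applying Lemma~\ref{lemma:qaccept} when $\Replica$ itself only conditionally prepared $\PP$ via the $\f+1$-$\AcceptedSet$ rule, and the explicit treatment of the gap between \emph{conditionally preparing} and \emph{recording} --- resent \MName{Sync} messages carry only $(\view{\PP'},\digest{\PP'})$, so the full \MName{Propose} message must be recovered via \MName{Ask}/forward from one of the $\f+1$ non-faulty replicas that \emph{accepted} (hence recorded) $\PP'$; the paper's proof silently elides this, and your handling of it is a genuine improvement.
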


\begin{proof}
    Let  $\PP' \in \Precedes{\PP}$. Due to Lemma~\ref{lemma:qaccept}, at least $\n-2\f \geq \f+1$ non-faulty replicas have sent \MName{Sync} messages that claim $\PP'$. As such, at-least $\f+1$ non-faulty replicas will reply to the message $S_u = \Message{Sync}{u,\PPf{\varnothing},\AcceptedSet, \Upsilon}$ sent by replica $\Replica$. Hence, eventually, $\Replica{}$ will receive $\f+1$ corresponding \MName{Sync} messages with $\PPf{\PP'}$, due to which $\Replica$ will \emph{record} and \emph{conditionally prepare} $\PP'$.
\end{proof}

Using Lemma~\ref{lemma:accept_precede}, we can prove:

\begin{theorem}\label{thm:cpall}
Let \Changed{$\PP_h$} be the highest proposal that any replica conditionally committed. All non-faulty replicas will eventually \emph{record} and \emph{conditionally prepare} all proposals in $\Precedes{\Changed{\PP_h}}$, \Changed{when communication becomes synchronous for sufficiently long.}
\end{theorem}

\begin{proof} 
    Since $\PP_h$ is \emph{locked} by a non-faulty replica, at least $\n-2\f\geq\f+1$ non-faulty replicas have
    \emph{conditionally prepared} $\PP_h$ (Lemma~\ref{lemma:qaccept}). These $\n-2\f$ non-faulty replicas either locked $\PP_h$ or some proposal with a lower view. Hence, all these replicas will broadcast \MName{Sync} messages with $\PPf{\PP_h} \in \AcceptedSet$ to convince other replicas to \emph{conditionally prepare} $\PP_h$. Hence, when communication becomes synchronous, Lemma~\ref{lemma:accept_precede} completes this proof.
\end{proof}

From Theorem~\ref{thm:cpall} we know that all non-faulty replicas will learn the same 
\emph{conditionally committed} chain. However, the replicas may not execute several proposals on the chain until
they learn full information of the proposal via the \MName{Ask}-recovery mechanism detailed in Section~\ref{ss:rules}.

\begin{figure}[t]
    \begin{myprotocol}

    \TITLE{Backup role}{running at each replica $\Replica \in \Replicas$ in view $\v$}
    \EVENT{$\Replica$ enters view $\v$}
        \STATE $\text{state} \GETS \textit{recording}$ (ST\ref{state:recording}).
        \STATE Set timer $\timer{R}$.
    \ENDEVENT
    \SPACE

    \EVENT{$\Replica$ receives an acceptable proposal $\PP$ \textbf{or} $\timer{R}$ expires}
        \STATE Broadcast $\Message{Sync}{\v, \PPf{\PP}, \AcceptedSet}$.
        \STATE $\text{state} \GETS \textit{Syncing}$ (ST\ref{state:syncing}).
    \ENDEVENT
    \SPACE

    \EVENT{$\Replica$ receives $\n-\f$ \MName{Sync} messages of view $\v$}
        \STATE $\text{state} \GETS \textit{Certifying}$ (ST\ref{state:certifying}).
        \STATE Sets timer $\timer{A}$.
    \ENDEVENT
    \SPACE

    \EVENT{$\Replica$ receives $\n-\f$ $\Message{Sync}{\v, \PPf{\PP}, \AcceptedSet}$ of the same $\PP$ \textbf{or} $\timer{A}$ expires}
        \STATE Enters view $\v+1$.
    \ENDEVENT
    \SPACE

    \EVENT{$\Replica$ receives \Changed{$\f+1$} \MName{Sync} messages \Changed{with views higher than or equal to $w$, $w > \v$}}
        \STATE Let $v$ be the current view of $\Replica$.
        \FOR{each view $u, \v \le u \le w$}
        \STATE Broadcasts $\Message{Sync}{u,\PPf{\varnothing},\AcceptedSet, \Upsilon}$.
        \ENDFOR
    \ENDEVENT
    \SPACE

    \end{myprotocol}
    \caption{Rapid View Synchronization in instance.}\label{alg:rvs}
\end{figure}

\subsection{\Changed{Mechanism Guaranteeing Liveness}}\label{ss:timer}

In some cases, replica $\Replica{}$ cannot make any progress unless it receives some specific messages
from other replicas:

\begin{enumerate}
    \item $\Replica{}$ cannot switch from Syncing (ST\ref{state:syncing}) to Certifying (ST\ref{state:certifying}) unless
    it receives $\n-\f$ \MName{Sync} messages of its current view.
    \item $\Replica{}$ cannot catch up to learn a path from a \emph{conditionally prepared} proposal
    to the genesis proposal unless at least $\f+1$ other replicas reply to its \MName{Sync} messages
    with flag $\Upsilon$\Changed{, which requires the receivers to retransmit the \MName{Sync} messages that they broadcast before.}
    \item $\Replica{}$ cannot \emph{record} a proposal it did not receive from the primary, unless any replica
    replies to its \MName{Ask} message by forwarding the corresponding \MName{Propose} message.
\end{enumerate}

However, due to unreliable communication, $\Replica{}$ may fail to receive messages, e.g., replies to \MName{Sync} messages with \Changed{flag $\Upsilon$} or replies to \MName{Ask} messages. To deal with this case, $\Replica{}$ will periodically retransmit the messages until it receives the necessary replies.

In an asynchronous environment, one cannot reliably distinguish between communication failure (e.g., due to long and unpredictable message delays) and replica failure. Hence, consensus protocols such as \HS{}~\cite{hotstuff} and many others~\cite{thetabyz,lewis} simply assume to be operating after a \emph{Global Synchronization Time}, at which point all communication is bound by some message delay such that all replicas can always reliably determine in which view they operate~\cite{hotstuff}. Such a design is inflexible in the presence of true asynchronous communication, however.

Instead, \SpotLess{} instances use our \Changed{\emph{Rapid View Synchronization}} mechanism to allow replicas to figure out in which view they should operate. To adapt to fluctuations in message delays, \SpotLess{} will adjust the timeout interval used by individual replicas to detect replica failures. As message delays in typical deployments do not often change drastically, we choose to \emph{not} use a traditional exponential backoff mechanism~\cite{book}, but instead to adjust the timeout interval of replicas $\Replica{}$ in a more moderate way. For consecutive timeouts of the same timer in consecutive views, we only increase the timeout interval by a constant $\varepsilon$ (after each consecutive view). If a replica receives an expected message for which the timeout interval was $\Delta$ \emph{before} $0.5\Delta$, then the replica reduces the timeout by half. We have the following technical result.

\begin{lemma}\label{lemma:all_vote}
    Let $\v$ be the highest view reached by a non-faulty replica after communication enters a period of synchronous communication. Non-faulty primary $\Primary{w}$ with $\v + 2 \le w$ is capable of finding a proposal \Changed{$\PP'$ such that all non-faulty replicas will accept a proposal $\PP$ extending from $\PP'$, $\PP \GETS \Message{Propose}{w,\T,\PPf{\PP'}}$}.
\end{lemma}

\begin{proof}
    Let $\PP'$ be the highest proposal \emph{locked} by any non-faulty replicas after view $w-1$ (in the worst case, $\PP'$ is the genesis proposal).
    As $\PP'$ was locked by a non-faulty replica, at least $\n-2\f\geq \f+1$ non-faulty replicas have \emph{conditionally prepared} $\PP'$ before view $w-1$.
    For any replica $\Replica{}$ among these $\n-2\f$ replicas, either $\Replica{}$ \emph{locked} on $\PP'$ or $\PP'$ is from a newer view than the locked proposal of $\Replica$. 
    
    Hence, in view $w-1$, the \MName{Sync} messages of the $\n-2\f\geq \f+1$ non-faulty replicas would include $\PP'$ in $\AcceptedSet$. As such, all non-faulty replicas would \emph{conditionally prepare} $\PP'$ and inform other replicas, and primary $\Primary{w}$ would learn that at least $\n-2\f$ non-faulty replicas have \emph{conditionally prepared} $\PP'$ before entering view $w$. As such, primary $\Primary{w}$ can propose a new proposal $\PP \GETS \Message{Propose}{w,\T,\PPf{\PP^{*}}}$, with $\PP^{*} = \PP'$ or with $\PP^{*}$ being some proposal with a higher view than $\PP'$ (if primary $\Primary{w}$ knows such an \emph{extendable} higher proposal). All non-faulty replicas will consider $\PP$ as \emph{acceptable} since their \emph{locked} proposal is either $\PP'$ or a proposal from an lower view.
\end{proof}

\Changed{As a direct consequence, we have the following corollary.}

\begin{corollary}\label{cor:all_accept}
Let $\v$ be the highest view reached by a non-faulty replica after communication enters a period of synchronous communication. If all non-faulty replicas have internal timers that are higher than the current maximum message delay, then the proposals of any non-faulty primary $\Primary{w}$ during view $w$, $w \geq \v+2$, will be \emph{conditionally prepared} after view $w$ by all non-faulty replicas.
\end{corollary}

\begin{proof}
    We denote to be $t$ the time that $\Primary{w}$ enters Certifying (\ref{state:certifying}) of view $w$.
    From Lemma~\ref{lemma:all_vote} we know that all non-faulty replicas will \emph{accept}.
    By $t$, at least $\f+1$ non-faulty replicas have sent $\Message{Sync}{w,\PPf{m},\AcceptedSet}$.
    By $t+\delta$, non-faulty replicas falling behind will have received at least $\f+1$ such
    \MName{Sync} messages, entered view $w$ and voted for $m$. By $t+2\delta$, all non-faulty
    replicas will have received at least $\n-\f$ such \MName{Sync} messages and \emph{conditionally prepare} $m$.
\end{proof}

Corollary~\ref{cor:all_accept} is at the basis of proving termination: consensus decisions are eventually made when communication is synchronous.

\begin{theorem}\label{theorem:liveness}
All non-faulty replicas will eventually commit new proposals after communication enters a sufficiently-long period of synchronous communication.
\end{theorem}

\begin{proof}
    First, we note that all non-faulty replicas will eventually satisfy the internal timer requirements, as they will increase their internal timer until they can successfully participate in the replication protocol. Hence, by Corollary~\ref{cor:all_accept} a non-faulty primary $\Primary{w}$ of view $w$ will eventually be able to propose a message $\PP$ that will be conditionally prepared. As there are $\n > 3\f$ replicas, after view $w$, there will eventually be a sequence of three consecutive non-faulty primaries. These  primaries will be able to make proposals satisfying the conditions of Definition~\ref{def:accept_lock} to assure a proposal will get committed.
\end{proof}

\section{Concurrent Consensus}\label{sec:concurrent}

The main benefit of chained consensus, as used by \SpotLess{} and \HS{}~\cite{hotstuff}, is that a single proposal represents the entire chain of preceding proposals. This greatly reduces the message complexity of view-changes when compared to traditional non-chained consensus protocols such as \PBFT{}~\cite{pbftj}.

Unfortunately, chained consensus requires that consecutive consensus decisions are made one-at-a-time, thereby preventing the usage of \emph{out-of-order processing} to maximize throughput. This makes \HS{} and individual chained consensus instances of \SpotLess{} significantly slower than traditional consensus protocols such as \PBFT{} in practical deployments: primaries in \PBFT{} can use \emph{out-of-order processing} to propose client requests for future views while waiting on the current consensus round to finish, thereby maximizing the utilization of the network bandwidth available at the primary independent of any message delays (which dominate the time it takes to finish a single consensus round).

As an alternative to out-of-order processing, \SpotLess{} will adopt concurrent consensus~\cite{rcc,mirbft}. By running multiple concurrent instances, \SpotLess{} is able to effectively utilize all network bandwidth and computational resources available: when one \SpotLess{} instance is waiting for a proposal to be processed (e.g., waiting for \MName{Sync} messages), other \SpotLess{} instances use the available network bandwidth and computational resources to propose additional requests.

\subsection{Concurrent Instances in SpotLess}

In \SpotLess{}, the system runs $\m$, $1 \le \m \le \n$, \SpotLess{} instances 
concurrently. Instances do not interfere with each other. Each instance 
independently deals with any malicious behavior. To enforce that each instance is coordinated by a distinct primary, the primary 
of $\Instance{i}$ in view $v$ is predetermined: $\ID{\Primary{i,v}} = (i+v) \bmod \n$. 
Figure~\ref{fig:primary_rotation} shows how \SpotLess{} assigns and rotates primary in 
each \SpotLess{} instance. As all instances rotate over all primaries, all instances are equally affected by malicious behavior. When given the choice, non-faulty replicas will prioritize instances that are in older views over other instances. Due to primary rotation and this instance prioritization, the view of all instances will remain roughly-in-sync.
\begin{figure}[t]
    \begin{tikzpicture}[]
        
        \node[font=\bfseries,align=center] at (2.6,  3.2) {$\Replica_0$};
        \node[font=\bfseries,align=center] at (4.2,  3.2) {$\Replica_1$};
        \node[font=\bfseries,align=center] at (5.8,  3.2) {$\Replica_2$};
        \node[font=\bfseries,align=center] at (7.4,  3.2) {$\Replica_3$};

        \node[font=\bfseries,align=center] at (1,  2.4) {\emph{View 0}};
        \node[font=\bfseries,align=center] at (1,  1.6) {\emph{View 1}};
        \node[font=\bfseries,align=center] at (1,  0.8) {\emph{View 2}};

        \draw[line width=1pt, double distance=3pt,
        arrows = {-Latex[length=0pt 2 0]}] (2.6, 2.8) -- (2.6, 0);
        \draw[line width=1pt, double distance=3pt,
        arrows = {-Latex[length=0pt 2 0]}] (4.2, 2.8) -- (4.2, 0);
        \draw[line width=1pt, double distance=3pt,
        arrows = {-Latex[length=0pt 2 0]}] (5.8, 2.8) -- (5.8, 0);
        \draw[line width=1pt, double distance=3pt,
        arrows = {-Latex[length=0pt 2 0]}] (7.4, 2.8) -- (7.4, 0);
                
                \draw[thick,red!30!white!90,fill=white] (2.6,2.4) circle (0.30cm);
                \draw[thick,red!30!white!90,fill=white] (4.2,1.6) circle (0.30cm);
                \draw[thick,red!30!white!90,fill=white] (5.8,0.8) circle (0.30cm);
                \draw[thick,yellow!,fill=white] (4.2,2.4) circle (0.30cm);
                \draw[thick,yellow!,fill=white] (5.8,1.6) circle (0.30cm);
                \draw[thick,yellow!,fill=white] (7.4,0.8) circle (0.30cm);
                \draw[thick,black!30!black!30,fill=white] (5.8,2.4) circle (0.30cm);
                \draw[thick,black!30!black!30,fill=white] (7.4,1.6) circle (0.30cm);
                \draw[thick,black!30!black!30,fill=white] (2.6,0.8) circle (0.30cm);
                \draw[thick,black!10!green!30,fill=white] (7.4,2.4) circle (0.30cm);
                \draw[thick,black!10!green!30,fill=white] (2.6,1.6) circle (0.30cm);
                \draw[thick,black!10!green!30,fill=white] (4.2,0.8) circle (0.30cm);

                \node[font=\bfseries,align=center] at (2.6, 2.4) {$\PP_{0,0}$};
                \node[font=\bfseries,align=center] at (4.2, 2.4)   {$\PP_{1,0}$};
                \node[font=\bfseries,align=center] at (5.8, 2.4) {$\PP_{2,0}$};
                \node[font=\bfseries,align=center] at (7.4, 2.4)   {$\PP_{3,0}$};
                \node[font=\bfseries,align=center] at (2.6, 1.6) {$\PP_{3,1}$};
                \node[font=\bfseries,align=center] at (4.2, 1.6)   {$\PP_{0,1}$};
                \node[font=\bfseries,align=center] at (5.8, 1.6) {$\PP_{1,1}$};
                \node[font=\bfseries,align=center] at (7.4, 1.6)   {$\PP_{2,1}$};
                \node[font=\bfseries,align=center] at (2.6, 0.8) {$\PP_{2,2}$};
                \node[font=\bfseries,align=center] at (4.2, 0.8)   {$\PP_{3,2}$};
                \node[font=\bfseries,align=center] at (5.8, 0.8) {$\PP_{0,2}$};
                \node[font=\bfseries,align=center] at (7.4, 0.8)   {$\PP_{1,2}$};
    \end{tikzpicture}
    \caption{Primary rotation in \SpotLess{} with four replicas and four instances. \Changed{
    The circles in the form of $\PP_{i, v}$ on the arrow of $\Replica_{r}$ represent that $\Replica_{r}$ is the primary 
    of instance $\Instance{i}$ in view $v$, where $r = (i + v) \bmod \n$.}}\label{fig:primary_rotation}
\end{figure}
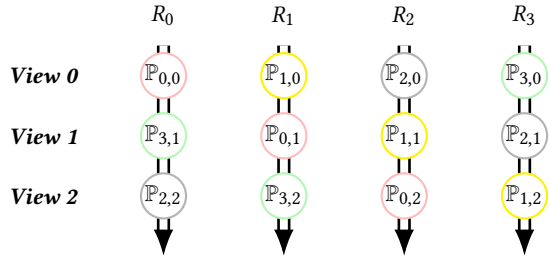

Each \SpotLess{} instance determines a local order of proposals. All committed proposals 
on the chains are totally ordered and then executed. We order the committed proposals 
among different instances by their view and instance identifier. We order all proposals from low 
view to high view and from instance $\Instance{0}$ to instance $\Instance{\m-1}$. Figure~\ref{fig:total_order} 
shows the \emph{total ordering} in \SpotLess{}. Finally, each replica \emph{executes} the committed proposals in order and 
informs the clients of the outcome of their requested transactions.

As individual \SpotLess{} instances provide consensus and combining consensus decisions of instances is deterministic, we conclude
\begin{theorem}\label{thm:cterm}
All instances of \SpotLess{} will eventually commit new proposals after communication enters a sufficiently-long period of synchronous communication. 
\end{theorem}

\subsection{Benefits of Concurrent Processing}

Theoretically, if we run $\m$, $1\le \m\le \n$ instances concurrently, \SpotLess{} could achieve $\m$ times the throughput of a single instance. As the number of replicas $\n$ and instances $\m$ scales, the throughput of \SpotLess{} will keep  growing until the throughput reaches resource bottlenecks (e.g., limited computational power or network bandwidth). Scaling beyond these resource bottlenecks, the throughput of \SpotLess{} decreases as the system scales further due to the added communication cost to reach consensus among more replicas.

Next, we shall theoretically model the benefits of \emph{concurrent consensus} in \SpotLess{}. Assume the message delay is $\Delta$ and the network bandwidth is $B$, and that proposal includes a \emph{batch} of $\beta$ individual transactions. Hence, the best-case throughput of individual \SpotLess{} instances with $\n$ replicas 
is
\begin{align*}\TP{\SpotLess_1} &= \frac{\beta}{t_\text{primary} + 2\Delta};& t_\text{primary} &= \frac{S_\text{primary}}{B} \end{align*}
In the above, $S_\text{primary}$ is the size of all $\n-\f$ \MName{Sync} messages the primary receives and of all $\n-\f$ \MName{Propose} messages the primary sends and  $t_\text{primary}$ is the time the primary is busy sending and receiving. 

If network bandwidth is not the bottleneck, then the maximum performance of $\n$ \SpotLess{} instances is $\n T_{\SpotLess_1}$. If we run $\n$ \SpotLess{} instances concurrently, one for each replica, the communication complexity per instance is $O(\n^2)$ in each round. Thus, as the system reaches resource bottlenecks, adding more instances will cause no gains in throughput. If we assume that network bandwidth is the bottleneck, the best-case throughput of \SpotLess{} with $\n$ instances and $\n$ replicas is upper-bounded by
\[
            \TP{\SpotLess_{\text{bw}}} = \frac{\n B\beta}{S_\text{primary}+(\n-1)S_\text{backup}}
\]
In the above, $S_\text{backup}$ is the size of all messages a replica sends and receives as part of the backup role in one view, and $S_\text{primary}+(\n-1)S_\text{backup}$ is the sum of all bandwidth usage by a single replica during $\n$ instances of a single view. Due to the chained design of \SpotLess{}, \SpotLess{} instances end up sending fewer messages per consensus decision than \PBFT{}. Consequently, when scaling up to many replicas (and many instances), the best-case throughput of \SpotLess{} will be higher than concurrent consensus based on \PBFT{} (e.g., \RCC{}~\cite{rcc}).


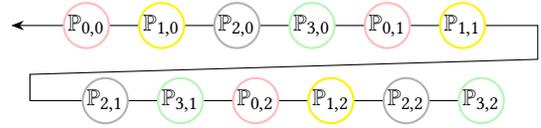
\begin{figure}[t]
    \begin{tikzpicture}
    
        \draw[<-] (-0.5, 0) -- (6.5, 0) -- (6.5, -0.45) -- (-0.25, -0.65) -- (-0.25, -1) -- (5.5, -1);
    
        \draw[thick,red!30!white!90,fill=white] (0.5, 0) circle (0.30cm);
        \node[font=\bfseries,align=center] at (0.5, 0) {$\PP_{0,0}$};
        \draw[thick,yellow!,fill=white] (1.5, 0) circle (0.30cm);
        \node[font=\bfseries,align=center] at (1.5, 0)   {$\PP_{1,0}$};
        \draw[thick,black!30!black!30,fill=white] (2.5, 0) circle (0.30cm);
        \node[font=\bfseries,align=center] at (2.5, 0) {$\PP_{2,0}$};
        \draw[thick,black!10!green!30,fill=white] (3.5, 0) circle (0.30cm);
        \node[font=\bfseries,align=center] at (3.5, 0)   {$\PP_{3,0}$};

        \draw[thick,red!30!white!90,fill=white] (4.5, 0) circle (0.30cm);
        \node[font=\bfseries,align=center] at (4.5, 0) {\Changed{$\PP_{0,1}$}};
        \draw[thick,yellow!,fill=white] (5.5, 0) circle (0.30cm);
        \node[font=\bfseries,align=center] at (5.5, 0)   {\Changed{$\PP_{1,1}$}};
        \draw[thick,black!30!black!30,fill=white] (0.75, -1) circle (0.30cm);
        \node[font=\bfseries,align=center] at (0.75, -1) {\Changed{$\PP_{2,1}$}};
        \draw[thick,black!10!green!30,fill=white] (1.75, -1) circle (0.30cm);
        \node[font=\bfseries,align=center] at (1.75, -1)   {\Changed{$\PP_{3,1}$}};
        
        \draw[thick,red!30!white!90,fill=white] (2.75, -1) circle (0.30cm);
        \node[font=\bfseries,align=center] at (2.75, -1) {$\PP_{0,2}$};
        \draw[thick,yellow!,fill=white] (3.75, -1) circle (0.30cm);
        \node[font=\bfseries,align=center] at (3.75, -1)   {$\PP_{1,2}$};
        \draw[thick,black!30!black!30,fill=white] (4.75, -1) circle (0.30cm);
        \node[font=\bfseries,align=center] at (4.75, -1) {$\PP_{2,2}$};
        \draw[thick,black!10!green!30,fill=white] (5.75, -1) circle (0.30cm);
        \node[font=\bfseries,align=center] at (5.75, -1)   {$\PP_{3,2}$};
    \end{tikzpicture}
    \caption{Total ordering of the twelve proposals of Figure~\ref{fig:primary_rotation} made among four concurrent instances in three consecutive views.}\label{fig:total_order}
\end{figure}

\section{Clients and Transaction Execution}\label{sec:client}

Lastly, we discuss how \SpotLess{} provides service to clients. Observe that to maximize throughput, it is beneficial to assure that no two instances propose the same transactions. In addition, it is important to balance all requests over all instances. Furthermore, we must also consider malicious primary behavior aimed at refusing service to some non-faulty clients.

Unlike systems such as \RCC{}~\cite{rcc}, which initially assigns every client to a single primary (that coordinates a single instance), \SpotLess{} assigns every client request to a single instance based on its digest: instance $\Instance{i}$, $1 \leq i \leq \m$, can only propose transactions with digest $d$ such that $(i - 1) = d \bmod \m$. Hence, multiple requests by the same client will be handled by different instances.  Since we adopt a cryptographically strong hash algorithm to compute digests, this allocation of transactions to instances also ensures load balance among instances.

When a non-faulty replica commits a transaction, it will execute that transaction in an order consistent with the total ordering of all transactions (across all instances). After executing a transaction, the replica informs the client of the result via an \MName{Inform} message. A client $\Client{}$ randomly sends a transaction $\T$ to a replica $\Replica_j$, starts a timer $t_C$, and awaits for $\f+1$ identical \MName{Inform} responses. If $\Client{}$ fails to get these $\f+1$ responses before $t_C$ expires, it sends the transaction to the next replica $\Replica_{j+1}$ and doubles the timeout interval. The client keeps doing so until it receive $\f+1$ identical \MName{Inform} responses confirming that the transaction has been executed. Due to primary rotation, \emph{every} replica will eventually be the primary of an instance that can propose $\T$, including a non-faulty replicas that will eventually propose $\T$.

Note that transaction execution in view $\v$ requires that all instances finished view $\v$ (hence, we know the total ordering of the transactions successfully proposed in view $\v$ and that need to be executed). In cases where system load is low, an instance primary $\Primary{}$ can end up not receiving any client transactions while other primaries are already proposing transactions for their instances. To prevent that executing of the proposals of other instances has to wait until $\Primary{}$ is able to propose a transaction, primary $\Primary{}$ can propose a no-op transaction if no other transactions are available. As we already proved \emph{termination} (Theorem~\ref{thm:cterm}) and \emph{non-divergence} (Theorem~\ref{thm:safety}) and the mechanisms outlined above guarantee \emph{service}, we finally conclude

\begin{theorem}
\SpotLess{} provides consensus.
\end{theorem}

\section{Evaluation}\label{sec:eval}

Previously, we detailed and analyzed the design of \SpotLess{}, showing several theoretical advantages when compared to its peers. Next, to show the practical advantages of \SpotLess{}, we will experimentally evaluate its performance, both in the normal case and during Byzantine failures. In our evaluation, we compare the performance of \SpotLess{} in \Changed{\RDB{} (Incubating)}, our high-performance open-source blockchain database, with the well-known primary-backup consensus protocols \PBFT{}~\cite{pbftj}, \HS{}~\cite{hotstuff}, \Narwhal{}~\cite{narwhal}, and our \PBFT{}-based concurrent consensus paradigm \RCC{}~\cite{rcc}. We focus on answering the following questions:

\begin{enumerate}
\renewcommand{\labelenumi}{Q\arabic{enumi}}
\item\label{q:1} Scalability: does \SpotLess{} deliver on the promises to provide better scalability than other consensus protocols?
\item\label{q:2} Latency: does \SpotLess{} provide low client latency while providing high throughput? What factors affect latency and throughput?
\item\label{q:3} What is the impact of batching client transactions on the performance of \SpotLess{}?
\item\label{q:4} How does \SpotLess{} perform in presence of Byzantine failures?
\item\label{q:5} How does \emph{concurrent consensus} improve performance?
\end{enumerate}

To study the practical performance of \SpotLess{} and other consensus
protocols, we implemented \SpotLess{} and other protocols in \Changed{\RDB{}}. To generate experimental workloads, we used the \emph{Yahoo Cloud Serving Benchmark}~\cite{ycsb} provided by the
Blockbench macro benchmarks~\cite{blockbench}. In the generated workload, each
client transaction queries a YCSB table with half a million active records and
$90\%$ of the transactions write and modify records. Before the experiments,
each replica is initialized with an identical copy of the YCSB table. We perform
all experiments on Oracle Cloud, using up to 128 machines for replicas and 32
machines for clients. Each replica and client is deployed on
a \texttt{e3}-machine with a $16$-core AMD EPYC 7742 processor, running at
$\SI{3.4}{\giga\hertz}$, and with $\SI{32}{\giga\byte}$ memory.

\subsection{The \Changed{\RDB{}} Blockchain Database}\label{eval:rdb}

The architecture of \Changed{\RDB{}} is heavily \emph{multi-threaded} and \emph{pipelined}
and is optimized for maximizing throughput~\cite{icdcs,resilientdb-demo,tut-middleware19,tut-debs20,tut-vldb20}. To further maximize throughput and minimize the overhead of any consensus protocol, \Changed{\RDB{}} has built-in support
for \emph{batching} client transactions.  We typically group $\SI{100}{\txn\per\batch}$. In this case, the size of a proposal
is $\SI{5400}{\byte}$ and of a client reply (for 100 transactions) is
$\SI{1748}{\byte}$. The other messages exchanged between replicas during the
replication algorithm have a size of $\SI{432}{\byte}$.

To maximize performance of \PBFT{}-like protocols, \Changed{\RDB{}} supports \emph{out-of-order processing} of transactions (see Section~\ref{sec:concurrent}) in which primaries can
propose future transactions before current transactions are executed. The chained designs of \SpotLess{} and \HS{} cannot utilize out-of-order processing, however.

In \Changed{\RDB{}}, each replica maintains an immutable \emph{blockchain ledger} that holds
an ordered copy of all executed transactions. The ledger not only stores all transactions,
but also strong cryptographic proofs of their acceptance by a consensus protocol. This ledger can be used to provide \emph{strong data provenance}.

In our experiments, replicas not only perform consensus but also communicate with
clients and execute transactions. In this practical setting, performance is not fully determined
by the cost of consensus, but also by the cost of \emph{communicating with clients},
of sequential \emph{execution} of all committed transactions, of \emph{cryptography}, and of other steps
involved in processing messages and transactions. The \emph{sequential execution} bottleneck of \Changed{\RDB{}} in our deployments, the maximum speed by which \Changed{\RDB{}} can execute transactions without performing any other tasks, is $\SI{340}{\kilo\txn\per\sec}$.

To reduce the overhead of sending individual messages, \Changed{\RDB{}} uses \emph{message buffering}: \Changed{\RDB{}} collects messages intended for specific receiver in a buffer and sends all messages at-once when the buffer reaches a threshold value. For each consensus protocol we study, the threshold value used has been configured to maximize throughput.

In addition, to better utilize the computing and network resources, \Changed{\RDB{}} optimizes \SpotLess{} in the following two ways. First, primaries broadcast the actual content of client requests in advance, and during the proposal step, they only operate on the digests of the requests. Second, in the geo-scale experiments that will be discussed later, primaries can broadcast a new proposal optimistically in a fast path before receiving 2f+1 votes for the proposal in the previous view. If any Byzantine behavior is detected, replicas switch back to the slow original path the next time they become the rotated primary.

\subsection{The Consensus Protocols}\label{eval:cons}

We evaluate the performance of \SpotLess{} by comparing it with a representative selection of four efficient practical consensus protocols implemented in \Changed{\RDB{}}:

\Paragraph{\PBFT{}~\cite{pbftj}.} We use heavily optimized out-of-order implementation that uses message authentication codes. 

\Paragraph{\RCC{}~\cite{rcc}.} RCC turns \PBFT{} into a concurrent consensus protocol.

\Paragraph{\HS{}~\cite{hotstuff}.} \HS{} uses threshold signatures to minimize communication. Since existing threshold signature algorithms are expensive and quickly become the bottleneck, we use a list of $\n-\f$ \texttt{secp256k1} digital signatures to represent a threshold signature, which improves our throughput. In our experiments, we implement the pipelined \Chained{} \HS{}.

\Paragraph{\Narwhal{}~\cite{narwhal}.} \Narwhal{} separates the replication of transactions and ordering transactions, enabling concurrent transaction dissemination. We simulate the communication complexity and computational overhead of \Narwhal{} by running \HS{} and requring replicas to broadcast messages consisting of a client batch and $2\f+1$ digital signatures.

We run the concurrent protocols \RCC{} and \SpotLess{} with $\n$ instances unless stated otherwise.

\subsection{Experiments}\label{eval:meas}

To be able to answer Questions Q\ref{q:1}--Q\ref{q:5}, we perform \Changed{fifteen} experiments
in which we measure the performance of \SpotLess{} and other consensus protocols. We measure
\emph{throughput} as the number of transactions that are executed per second and
\emph{latency} as the average duration between when the client sends a transaction and
when the client receives $\f+1$ corresponding responses. Unless stated otherwise, all replicas are non-faulty. We run each experiment for
$\SI{130}{\second}$ (except the seventh experiment): the first
$\SI{10}{\second}$ are warm-up, and measurement results are collected over the next
$\SI{120}{\second}$. We average our results over three runs.

In the \emph{scalability experiment}, we measure throughput as a function of the number of replicas. We vary
the number of replicas between $\n=4$ and $\n=128$ and we use a batch size of
$\SI{100}{\txn\per\batch}$. The results can be found in Figure~\ref{fig:spotless_plots}(a).

In the \emph{batching experiment}, we measure the throughput as a function of the number of replicas. We use $\n = 128$ replicas and we vary batch size between $\SI{10}{\txn\per\batch}$ and $\SI{400}{\txn\per\batch}$. The results can be found in Figure~\ref{fig:spotless_plots}(b). 

In the \emph{throughput-latency} experiment, we measure the latency as a function of the throughput. We use $\n=128$ replicas, set the batch size to $\SI{100}{\txn\per\batch}$, and we vary the speed by which each primary receives client requests to affect
throughput and latency. The results can be found in Figure~\ref{fig:spotless_plots}(c).

In \emph{the transaction-size} experiment, we measure the throughput of \SpotLess{} as a function of the  
individual YCSB transaction size. We use $\n=128$ replicas and vary the transaction size from $\SI{48}{\byte}$ to 
$\SI{1600}{\byte}$. The results can be found in Figure~\ref{fig:spotless_plots}(d).

\begin{figure}[t]
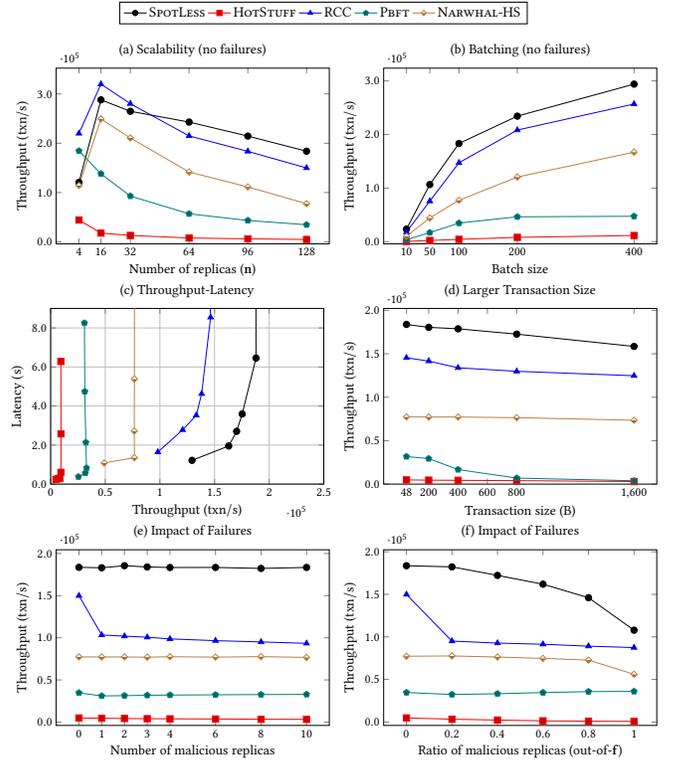

    \centering
    \scalebox{0.5}{\ref{mainlegend}}\\[5pt]
    \setlength{\tabcolsep}{1pt}
    \begin{tabular}{cc@{\quad}cc}
    \resultgraphd{\dataTputNodesFF}{(a) Scalability (no failures)}{\axisnodes}{\axistput}{\axisticksnodes}&
    \resultgraph{\dataTputBatch}{(b) Batching (no failures)}{\axisbatches}{\axistput}{\axisticksbatches}\\
    \resultgraphtputlat{\dataTputLatc}{(c) Throughput-Latency}{\axistput}{\axislat}{\axistickstput}&
    \resultgraphtxnsize{\dataTputTxnSize}{(d) Larger Transaction Size}{\axistxnsize}{\axistput}{\axistickstxnsize}\\
    \resultgraphfailureb{\dataTputCFMb}{(e) Impact of Failures}{\axisfailures}{\axistput}{\axisticksfailures}&
    \resultgraphfailurec{\dataTputPFMb}{(f) Impact of Failures}{\axispercentage}{\axistput}{\axistickspercentage}
    \end{tabular}
    \caption{Performance of \SpotLess{} and other protocols. \Changed{The number of replicas is 128 except in (a).}}\label{fig:spotless_plots}
\end{figure}

\begin{figure}
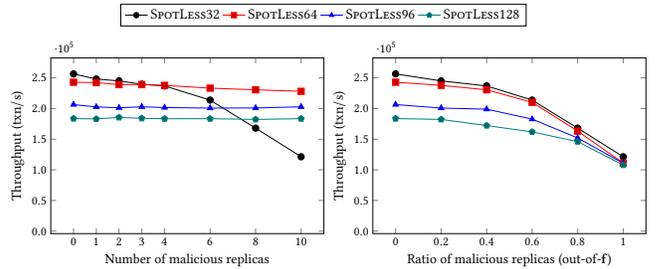

    \centering
    \scalebox{0.5}{\ref{mainlegend2}}\\[5pt]
    \setlength{\tabcolsep}{1pt}
    \begin{tabular}{cc@{\quad}cc}
    \resultgraphfailure{\dataTputCF}{}{\axisfailures}{\axistput}{\axisticksfailures}&
    \resultgraphfailure{\dataTputPF}{}{\axispercentage}{\axistput}{\axistickspercentage}
    \end{tabular}
    \caption{Performance of \SpotLess{} during failures as a function of the number of replicas and faulty replicas.}
\label{fig:spotless_failure}
\end{figure}

In the \emph{all-throughput-failures} experiment, we measure the throughput as a function of the number of malicious replicas that do not participate in consensus.
We use $\n=128$ replicas, and we vary the number of faulty replicas between $0$ and $10$ or between  $0$ and $\f$. We make the faulty replicas non-responsive at the same time point and measure throughput afterward for $\SI{120}{\second}$. We set the timeout length in \SpotLess{}, \HS{}, and \Narwhal{}
based on the calculated average view duration and set the timeout length in \RCC{} and \PBFT{} based on the
average client latency. The results can be found in Figure~\ref{fig:spotless_plots}(e) and (f).

In the \emph{$\SpotLess$-throughput-failures} experiment, we further measure the throughput of \SpotLess{} as a function of the number of replicas and the number of malicious replicas that do not participate in consensus. We vary the number of replicas between $\n \in \{ 32, 64, 96, 128 \}$ and we perform two measurements for each $\n$. We vary the number of
faulty replicas between $0$ and $10$ or between  $0$ and $\f$. Based on the calculated average view duration, we have set the timeout length in \SpotLess{} appropriately. The results can be found in Figure~\ref{fig:spotless_failure}.

In the \emph{throughput-latency-failure experiment}, we measure the latency of \SpotLess{} and \RCC{} as a 
function of throughput in the presence of malicious replicas. We use $\n=128$ replicas and set the 
number of faulty, non-responsive, replicas to be $1$ or $\f$. We vary the number of client 
batches that each primary receives in the same way as described in the throughput-latency experiment. We only count the latency of proposals that 
are sent to non-faulty replicas. The results can be found in Figure~\ref{fig:tput_lat_crash}.

In the \emph{parallel transaction processing experiment}, we measure the performance of \SpotLess{} and \RCC{} as a function of the amount of \emph{concurrent} (both protocols) and \emph{out-of-order} (\RCC) processing. To do so, we measure the throughput and latency of \SpotLess{} and \RCC{} as a function of the number of client batches that each primary receives. We use $\n = 128$ replicas and we vary the number of client batches between $12$ and $200$. The results can be found in Figure~\ref{fig:inflight}.

\begin{figure}[t]
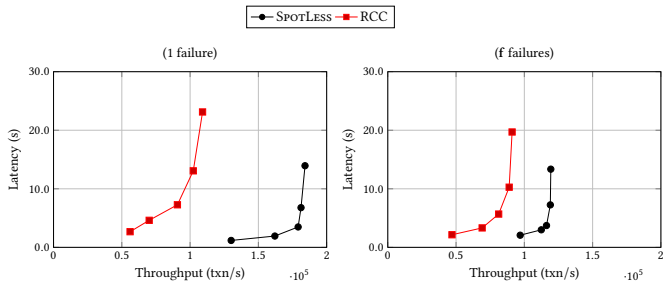

    \centering
    \scalebox{0.5}{\ref{mainlegend5}}\\[5pt]
    \setlength{\tabcolsep}{1pt}
    \begin{tabular}{cc@{\quad}cc}
    \resultgraphtputlatF{\dataTputLatFone}{(1 failure)}{\axistput}{\axislat}{\axistickstput}&
    \resultgraphtputlatF{\dataTputLatFF}{($\f$ failures)}{\axistput}{\axislat}{\axistickstput}
    \end{tabular}
    \caption{System throughput-latency of \SpotLess{} and \RCC{} in the presence of failures (128 replicas).}
    \label{fig:tput_lat_crash}
\end{figure}

In the \emph{throughput-Byzantine} experiment, we measure the throughput of \SpotLess{} in the presence of attacks as a function of the number of Byzantine replicas. We consider four types of attacks:

\Changed{
\begin{enumerate}
    \renewcommand{\labelenumi}{A\arabic{enumi}}
    \item faulty replicas are non-responsive;
    \item faulty replicas act \emph{malicious} when they are the primary by keeping $\f$ non-faulty replicas \emph{in the dark} (by not sending proposals to them);
    \item faulty replicas act \emph{malicious} by sending conflicting concurring votes in an attempt to cause \emph{divergence}: they send one message to $\f$ non-faulty replicas and a different one to the other non-faulty replicas; and
    \item faulty replicas act \emph{malicious} by refusing to participate in the consensus of proposals from non-faulty primaries, this in an attempt to subvert non-faulty primaries (and make them look faulty).
\end{enumerate}
}

\begin{figure}[t]
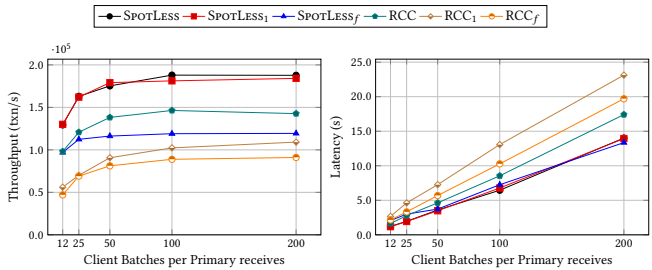

    \centering
    \scalebox{0.5}{\ref{mainlegend9}}\\[5pt]
    \setlength{\tabcolsep}{1pt}
    \begin{tabular}{cc@{\quad}cc}
    \flighttput{\dataFlightTput}{}{\axisinflight}{\axistput}{\axisticksinflight}&
    \flighttput{\dataFlightLat}{}{\axisinflight}{\axislat}{\axisticksinflight}
    \end{tabular}
    \caption{Throughput and latency of \SpotLess{} and \RCC{} as a function of the amount of parallel transaction processing (128 replicas).}\label{fig:inflight}
\end{figure}

For comparison, we include \RCC{}. The \emph{victims} of these attacks are the replicas that are kept in the dark (A2), receive a proposal 
that is received by not more than $\f$ non-faulty replicas (A3), or are not responded to (A4). The throughput of \RCC{} is not influenced by A2, A3, and A4 as long as the number of \emph{victims} is not greater than $\f$. Hence, we only include the normal-case (failure-free) throughput of \RCC{} and the throughput during A1 for comparison. We use $\n = 128$ replicas and we vary the number of malicious replicas between $0$ and $10$ or between $0$ and $\f$. The results can be found in Figure~\ref{fig:byzantine_failure}.

In the \emph{real-time-throughput-failure} experiment, we measure the real-time throughput of
\SpotLess{} and \RCC{} after making malicious replicas non-responsive as a function of time. We use $\n=128$ replicas, and 
we set the number of faulty replicas to $1$ or $\f$. We run the experiments for $\SI{140}{\second}$, 
record throughput every 5 seconds, and have the failures happen at the $10$th second. The results 
can be found in Figure~\ref{fig:timeline}.

In the \emph{concurrent-consensus} experiment, we measure the throughput of \SpotLess{} and \RCC{} as a function of the number of replicas and concurrent instances. We use $\n \in \{ 32, 128 \}$ replicas and we vary the number of concurrent instances between $1$ and $\n$. The results can be found in Figure~\ref{fig:spotless_concurrent}.

\Changed{
In the \emph{computing-power-impact} experiment, we measure the throughput of \SpotLess{} and other protocols as a function of the number of CPU cores in each replica. We use $\n = 128$ replicas and we vary the number of CPU cores between 4 and 32. The results can be found in Figure~\ref{fig:resource_impact}(a).

In the \emph{network-bandwidth-impact} experiment, we measure the throughput of \SpotLess{} and other protocols as a function of the bandwidth. We use $\n = 128$ replicas and vary the bandwidth between $\SI{500}{\mega\bit\per\second}$ and $\SI{4000}{\mega\bit\per\second}$ using FireQOS~\cite{fireqos}, a program that helps configure traffic shaping on Linux. The results can be found in Figure~\ref{fig:resource_impact}(b).

In the \emph{global-regions} experiment, we measure the throughput of \SpotLess{} and other protocols as a function of the number of regions. We use $\n = 128$ replicas and vary the number of regions between 1 and 4. For each run, the 128 replicas are uniformly distributed in the regions Oregon, North Virginia, London, and Zurich. The results can be found in Figure~\ref{fig:resource_impact}(c) and (d).

In the \emph{non-concurrent-failure} experiment, we measure the throughput of single-instance \SpotLess{} and \HS{} as a function of the number of faulty replicas. We use $\n = 128$ replicas and vary the number of malicious replicas between $0$ and $\f$. The results can be found in Figure~\ref{fig:single_instance_failure}.
}



\begin{figure}
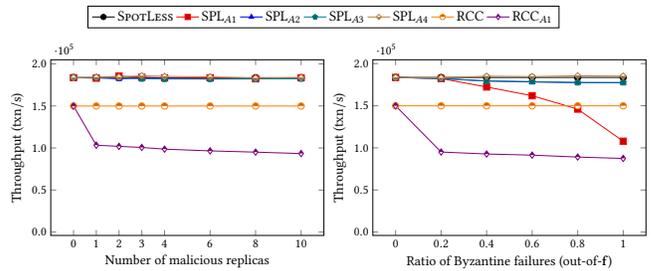

    \centering
    \scalebox{0.5}{\ref{byzantinelengend}}\\[5pt]
    \setlength{\tabcolsep}{1pt}
    \begin{tabular}{cc@{\quad}cc}
    \resultgraphfailurek{\dataTputBFa}{}{\axisbfailures}{\axistput}{\axisticksfailures}&
    \resultgraphfailurel{\dataTputBFb}{}{\axisbpercentage}{\axistput}{\axistickspercentage}
    \end{tabular}
    \caption{Performance of \SpotLess{} and \RCC{} during Byzantine failures (128 replicas, four attack scenarios for \SpotLess{}). \textit{Note: $\Name{SPL}$ is the abbreviation for \SpotLess{} here.}}\label{fig:byzantine_failure}
\end{figure}

\subsection{Experiment Analysis}\label{ss:discuss}

The experimental results presented in the previous sections allows us to answer the research questions~Q\ref{q:1}--Q\ref{q:5}. First, we observe that increasing
the batch size,  increases the performance of all consensus protocols, thereby answering Q\ref{q:3} as 
expected. Since the gains brought by increased the batch size are small after $\SI{100}{\txn\per\batch}$, we used $\SI{100}{\txn\per\batch}$ in all other experiments.

\SpotLess{} outperforms all other protocols in failure-free conditions (Q\ref{q:1}, Q\ref{q:2}). As Figure~\ref{fig:summary} shows, the amortized message complexity per decision is $\n^2$ for \SpotLess{}, while it is $2\n^2$ for \RCC{}. Hence, as \SpotLess{} has fewer messages to process, \SpotLess{} can even outperform \RCC{} by up to 23\%. Due to the \emph{message buffer mechanism}, \SpotLess{} and \RCC{} require sufficient batches of client requests to fill the system pipeline. Otherwise, the two protocols may get stalled since no messages are sent and processed. When the pipeline is full, the latency of both \SpotLess{} and \RCC{} in \Changed{\RDB{}} is dominated by the maximum throughput, as Figure~\ref{fig:spotless_plots}(c), \ref{fig:tput_lat_crash}, and \ref{fig:inflight} show. The higher throughput is, the shorter a client request waits to be proposed and then the lower latency is. Thus, even though \SpotLess{} needs more communication phases than \RCC{} to commit a proposal, \SpotLess{} has a lower latency by up to $32\%$ than \RCC{}. Also, \SpotLess{} outperforms \Narwhal{} because, for each committed block, \SpotLess{} verifies $O(n)$ MACs while \Narwhal{} verifies $O(n)$ digital signatures.

By introducing concurrent processing, \SpotLess{} is able to outperform \HS{}, the other chained consensus protocol, by up to 3803\%. In this situation, the performance of \HS{} is bottlenecked by the message delay due to the lack of out-of-order processing. From Figure~\ref{fig:spotless_plots}(d) we conclude that  \RCC{} and \SpotLess{} are able to sustain high throughput even if we increase  the transaction size to $\SI{1600}{\byte}$ per YCSB transaction, whereas the throughput of \PBFT{} and \HS{} decreases greatly. This is easily explained: the concurrent design of \RCC{} and \SpotLess{} load-balances the primary task to all replicas, whereas in \PBFT{} and \HS{} the performance is bottlenecked by the bandwidth available to the single proposing primary.

As Figure~\ref{fig:spotless_failure} shows, non-responsive faulty replicas negatively affect the 
performance of \SpotLess{} in all cases (Q\ref{q:4}): indeed, non-responsive faulty
replicas do not perform their primary role, due to which the non-faulty replicas can only wait until their timers expire to switch out these faulty primaries.
The larger the number of replicas, the smaller the relative influence of faulty replicas on performance.
For example, when there are $\f$ faulty replicas, the throughput of \SpotLess{128} decreases by 
41\% while that of \SpotLess{32} decreases by 54\%. Due to \emph{concurrent consensus}, a larger number of replicas implies more \SpotLess{} instances with non-faulty primaries that utilize CPU resources while waiting for the instances with non-responsive primaries.

From Figure~\ref{fig:spotless_plots}, \ref{fig:tput_lat_crash}, and \ref{fig:inflight}, we also know that \SpotLess{} shows great resilience 
to non-responsive faulty replicas when compared with other protocols (Q\ref{q:4}). In a deployment with $128$ replicas,
the first $\SI{120}{\second}$ after failures happen, \SpotLess{} shows a gain in throughput over other protocols and a lower latency (Q\ref{q:1}, Q\ref{q:2}) despite the number of faulty replicas.

\begin{figure}[t!]
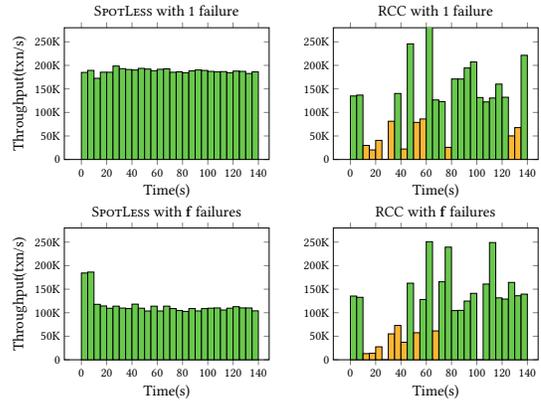

    \centering
    \setlength{\tabcolsep}{1pt}
    \begin{tabular}{cc@{\quad}cc}
    \graphSpotLess&
    \graphRCC&\\
    \graphSpotLessf&
    \graphRCCf
    \end{tabular}
    \caption{Throughput timeline of \SpotLess{} and \RCC{} after injecting failures.}\label{fig:timeline}
\end{figure}

\begin{figure}
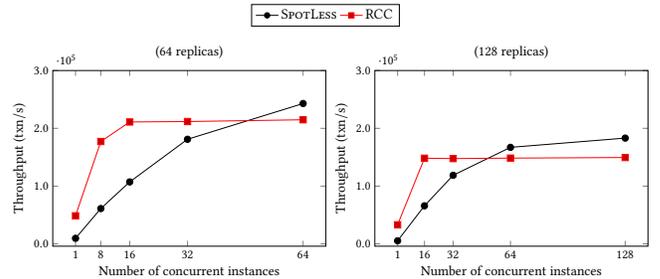

    \centering
    \scalebox{0.5}{\ref{mainlegend3}}\\[5pt]
    \setlength{\tabcolsep}{1pt}
    \begin{tabular}{cc@{\quad}cc}
    \resultgraphconcurrent{\dataTputCIa}{(64 replicas)}{\axisinstances}{\axistput}{\axisticksinstances}&
    \resultgraphconcurrent{\dataTputCIb}{(128 replicas)}{\axisinstances}{\axistput}{\axisticksinstancesb}
    \end{tabular}
    \caption{Performance of \SpotLess{} and \RCC{} as a function of the number of concurrent instances.}\label{fig:spotless_concurrent}
\end{figure}

Thanks to the \emph{\MName{Ask}-recovery mechanism} and \emph{Rapid View Synchronization}, described in Section~\ref{sec:design}, 
\SpotLess{} shows strong resilience to Byzantine attacks, as  we can observe from the results in Figure~\ref{fig:byzantine_failure}. No matter the type of attack,  the \emph{victims} can quickly detect the failure and catch up by receiving $\f+1$ \MName{Sync} 
messages from other non-faulty replicas and sending \MName{Ask} messages. When facing non-responsive 
faulty replicas, the two mechanisms are useless, however, as timing out instances is the only way to advance view in this case.

The results of Figure~\ref{fig:timeline} show obvious fluctuations in the real-time throughput of \RCC{} after injecting failures. This is due to the usage of an exponential back-off penalty algorithm to ignore instances with faulty primaries. Eventually, the throughput of \RCC{} gradually recovers to the original level and then keeps stable. This is the best case for \RCC{}, however, as all $\f$ failures happen at the same time. If the failures appear one by one, \RCC{} will suffer from these low-throughput  fluctuations (the yellow columns in Figure~\ref{fig:timeline})  during each failure. In contrast, \SpotLess{} presents a more stable throughput timeline after failures happen (Q\ref{q:4}).

Figure~\ref{fig:spotless_concurrent} shows that \SpotLess{} benefits more from concurrent consensus than \RCC{}, especially when there are many instances. (Q\ref{q:5}). When there are 32 or fewer instances, \RCC{} outperforms \SpotLess{} because \RCC{} enables \emph{out-of-order processing} in individual instances, which is not supported by the chained design of \SpotLess{} instances. As the number of concurrent instances increases, the throughput of \RCC{} reaches a message processing bottleneck when there are 16 instances and then remains stable, whereas the throughput of \SpotLess{} can increase further due to the lower message complexity and reaches its peak value when there are $\n$ instances, higher than \RCC{} by up to 23\%.

\Changed{The performance of consensus protocols is significantly influenced by computing and network resources.
First, Figure~\ref{fig:resource_impact}(a) shows that the performance of all protocols decreases when compute power is restrictd (fewer CPU cores in each replica). Second, Figure~\ref{fig:resource_impact}(b) shows that decreasing the network bandwidth negatively impacts the performance of all protocols in which network is a bottleneck. We note that \Narwhal{} is barely affected, however, as it is limited by computing resources as it has to verify $\n-\f$ digital signatures per block. Similarly, Figure~\ref{fig:resource_impact}(c) and (d), show that \emph{increasing} the number of regions, which not only decreases network bandwidth but also increases latency, \emph{negatively impacts} the performance of all protocols. We notice that in all cases, \SpotLess{} maintains a higher performance than \RCC{}, even thoug \SpotLess{} is affected by decreasing bandwidth. Finally, the comparison between Figure~\ref{fig:resource_impact}(c) and (d), shows that that increasing the batch size can partially mitigates bandwidth bottlenecks.

Figure~\ref{fig:single_instance_failure} shows that the presence of Byzantine failures has similar negative effects on the performance of single-instance \SpotLess{} and \HS{}. In all cases, the throughput of single-instance \SpotLess{} is higher than that of \HS{} due to the lower computation costs of verifying signatures in \SpotLess{} (as compared to dealing with the threshold signatures used in \HS{}). Hence, compared with \HS{}, replicas in \SpotLess{} are able to respond more quickly, lowering the per-round latency and increasing throughput.}

Based on our findings, we conclude that \SpotLess{} makes full use of \emph{concurrent consensus} (Q\ref{q:5}),
provides higher throughput and better scalability than any other consensus protocol (Q\ref{q:1}), and does so with low latency (Q\ref{q:2}). Moreover, client batching
does benefit the performance of \SpotLess{} (Q\ref{q:3}). Finally, we conclude that \SpotLess{} can 
efficiently deal with failures (Q\ref{q:4}) thanks to \emph{concurrent consensus},  the
\emph{\MName{Ask}-recovery mechanism}, and \emph{Rapid View Synchronization}.

\begin{figure}[t]
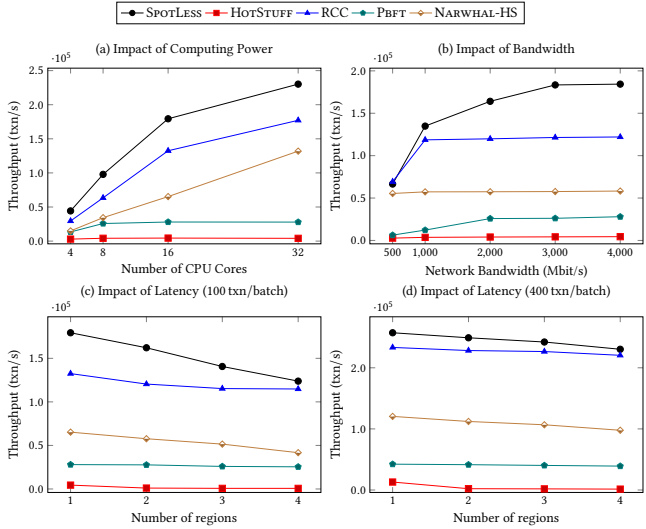

    \centering
    \scalebox{0.5}{\ref{mainlegend}}\\[5pt]
    \setlength{\tabcolsep}{1pt}
    \begin{tabular}{cc@{\quad}cc}
    \resultgraphresource{\dataTputCore}{\Changed{(a) Impact of Computing Power}}{\axiscore}{\axistput}{\axistickscores}{}&\resultgraphresource{\dataTputBandwidth}{\Changed{(b) Impact of Bandwidth}}{\axisbandwidth}{\axistput}{\axisticksbandwidth}{}\\
    \resultgraphresource{\dataTputRegion}{{\Changed{(c) Impact of Latency ($\SI{100}{\txn\per\batch}$)}}}{\axisregions}{\axistput}{\axisticksregions}{title style={yshift=1.5ex}}&
    \resultgraphresource{\dataTputRegionFour}{\Changed{(d) Impact of Latency ($\SI{400}{\txn\per\batch}$)}}{\axisregions}{\axistput}{\axisticksregions}{title style={yshift=1.5ex}}
    \end{tabular}
    \caption{\Changed{Impact of computing and network resources on the performance with 128 replicas.}}\label{fig:resource_impact}
\end{figure}

\begin{figure}
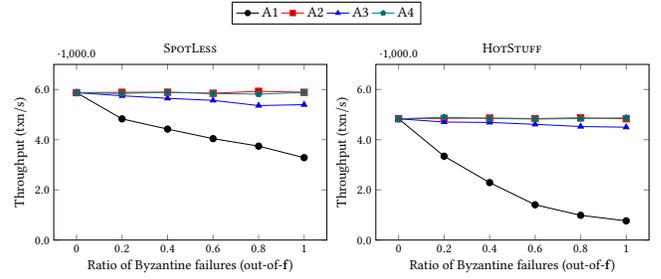

    \centering
    \scalebox{0.5}{\ref{si_legend}}\\[5pt]
    \setlength{\tabcolsep}{1pt}
    \begin{tabular}{cc@{\quad}cc}
    \resultgraphsifaiurer{\dataTputSingleInsatncePVP}{\SpotLess{}}{\axisbpercentage}{\axistput}{\axistickspercentage}&
    \resultgraphsifaiurer{\dataTputSingleInsatnceHS}{\HS{}}{\axisbpercentage}{\axistput}{\axistickspercentage}
    \end{tabular}
    \caption{\Changed{Performance of single-instance \SpotLess{} and \HS{} with failures.}}\label{fig:single_instance_failure}
\end{figure}

\section{Related Work}\label{sec:related}

There is abundant literature on consensus and primary-backup consensus in specific 
(e.g.,~\cite{wild,scaling,untangle,leaderless-consensus,next700bft,zyzzyvaj,sbft,ardagna2020blockchain,xft,
loghin2022blockchain,blockchain-info-share,lineagechain,bft-forensics,mirbft,gem2tree}), to reduce the 
communication cost and improve performance and resilience of the consensus systems~\cite{thetabyz,lewis,kogias2016enhancing,borealis,
shadoweth,occlum,eactors,experimental-bft-improv,bft-to-cft,kuhring2021streamchain}. 
In previous sections, we already discussed how \SpotLess{} relates to \PBFT{}~\cite{pbftj}, \RCC{}~\cite{rcc}, 
\HS{}~\cite{hs}, and \Narwhal{}~\cite{narwhal}. Next, we shall focus on 
other works that deal with either \emph{improving throughput and scalability} or with \emph{simplifying consensus}, 
the two strengths of \SpotLess{}.

\Paragraph{Trusted Hardware}
There is a large body of work on improving and simplifying primary-backup consensus by employing 
\emph{trusted hardware}. A representative example is \MinBFT{}~\cite{minbft}, which uses trusted hardware to 
prevent malicious primaries from proposing conflicting client requests in a round. Via the usage of trusted
hardware, one can reduce the impact of Byzantine behavior and simplify both the normal case and the recovery process, e.g., \MinBFT{} eliminates one round of communication from \PBFT{} and can operate in deployments in which $\n > 2\f$ (instead of the typical $\n > 3\f$) holds. Protocols that require trusted hardware place significant restrictions on the environment in which they can run, due to which their practical usage is limited. As such, these protocols are not a replacement for a general-purpose consensus such as \SpotLess{}.~\cite{disth}

\Paragraph{Leader-Less Consensus.} Leader-less protocols such as \Name{HoneyBadger}~\cite{badger} and \Name{Dumbo}~\cite{dumbo} eliminate the limitations of \PBFT{} and other primary-backup consensus protocols via a fully decentralized and fully asynchronous design. In these protocols, all replicas have the same responsibilities, due to which the cost of consensus is equally spread-out over all replicas. These leader-less protocols claim to improve resilience over \PBFT{} in asynchronous environments. Due to the high complexity of fully asynchronous consensus, their practical performance is limited, however.

\Paragraph{Sharding.}
Recently, there have been several approaches toward scalability of \RDMS{}s by sharding 
them, e.g.,~\cite{ahl,jbyshard,byshard,sharper,blockchaindb}. Although sharding has the potential to drastically 
improve scalability for certain workloads, it does so at a high cost for complex workloads. Furthermore, 
sharding impacts resilience, as sharded systems put requirements on the number of failures in each individual 
shard (instead of putting those requirements on all replicas in the system). Finally, sharding is orthogonal to 
consensus, as proposed sharded systems all require a high-performance consensus protocol to run individual 
shards. For this task, \SpotLess{} is an excellent candidate.

\Paragraph{Reducing Primary Costs.}
There are several approaches toward reducing the cost for the primary to coordinate consensus in 
\PBFT{}-style primary-backup consensus protocols, thereby reducing the limitations of primary-backup 
designs. Examples include (1) protocols such as \Name{FastBFT}~\cite{fastbft} and the geo-scale aware \Name{GeoBFT}~\cite{geobft} that use a \emph{hierarchical communication architecture} to offload the cost for a primary to propose client requests to other replicas; (2) protocols such as \Narwhal{}~\cite{narwhal} that use a gossip-based communication protocol to replicate client requests, thereby sharply reducing the cost for primaries to propose these requests; and (3) protocols such as \Name{Algorand}~\cite{algorand} that restrict consensus to a small subset of the replicas in the system  (whom then enforce their decisions upon all other replicas), thereby reducing the cost of the primary to coordinate consensus. Unfortunately, each of these protocols introduces its own added complexity or environmental restrictions to the design of consensus, e.g., \Name{FastBFT} requires trusted hardware, the design of \Name{GeoBFT} impacts resilience in a similar way as sharded designs do, and \Name{Algorand} relies on complex cryptographic primitives due to which it can only guarantee to work with high probability.
\section{Conclusion}\label{sec:conclusion}
In this paper, we proposed \SpotLess{}, a high-performance robust consensus protocol. \SpotLess{} combines the high 
throughput of \emph{concurrent consensus architectures} with the reduced complexity provided by \emph{chained consensus}. 
Furthermore, \SpotLess{} improves on the resilience of existing chained consensus designs by introducing the 
\emph{Rapid View Synchronization protocol}, which guarantees a continuous low-cost recovery path that is robust 
during unreliable communication and does not require costly threshold signatures.

We have put the design of \SpotLess{} to the test by implementing it in \RDB{}, our high-performance resilient fabric, 
and we compared \SpotLess{} with existing consensus protocols. Our experiment results show that the performance of \SpotLess{} is 
excellent: \SpotLess{} greatly outperforms traditional primary-backup consensus protocols such as \PBFT{} by up to 430\%, 
\Narwhal{} by up to 137\%,
and \HS{} by up to 3803\%. \SpotLess{} is even able to outperform \RCC{}, a state-of-the-art high-throughput concurrent 
consensus protocol, by up to 23\% in optimal conditions, while providing lower latency in all cases. Furthermore, 
\SpotLess{} can maintain a stable latency and consistently high throughput, even during failures.

\begin{acks}
This work was supported in part by (1) {\em Oracle Cloud Credits} and 
related resources provided by the Oracle for Research program and (2) the 
{\em NSF STTR} under {\em Award Number} 2112345 provided to {\em Moka Blox LLC}.
\end{acks}

\bibliographystyle{ACM-Reference-Format}
\bibliography{sources}

\end{document}